\newcommand{\aiversion}[1]{}
\newcommand{\longversion}[1]{#1}
\newcommand{\citex}[1]{\citeauthor{#1}~\shortcite{#1}}
\newtheorem{lemma}{Lemma}
\newtheorem{theorem}{Theorem}
\newtheorem{proposition}{Proposition}
\newcommand{\Card}[1]{|#1|}
\let\phi=\varphi
\let\epsilon=\varepsilon
\newcommand{\CCC}{\mathcal{C}}
\newcommand{\var}{\mathit{var}}
\newcommand{\ARITY}{\mbox{\rm \textsf{arity}}}
\newcommand{\VARS}{\mbox{\rm \textsf{vars}}}
\newcommand{\DOM}{\mbox{\rm \textsf{dom}}}
\newcommand{\CONS}{\mbox{\rm \textsf{cons}}}
\newcommand{\DEG}{\mbox{\rm \textsf{deg}}}
\newcommand{\SIZE}{\mbox{\rm \textsf{size}}}
\newcommand{\TUPLES}{\mbox{\rm \textsf{tuples}}}
\newcommand{\TW}{\mbox{\rm \textsf{tw}}}
\newcommand{\Nat}{\mathbb{N}}
\newcommand{\nop}[1]{}
\newcommand{\msc}[1]{\textsc{#1}}
\title{On the Subexponential Time Complexity of CSP} \longversion{
  \author{Iyad Kanj\\
    \small School of Computing,
    DePaul University\\[-4pt]
    \small     Chicago, USA\\[-4pt]
    \small \texttt{ikanj@cs.depaul.edu} \and Stefan
    Szeider\thanks{Research supported by the European Research Council
      (ERC), project COMPLEX REASON 239962.
    }\\
    \small Vienna University of Technology\\[-4pt]
    \small Vienna, Austria\\[-4pt]
    \small \texttt{stefan@szeider.net} } \date{} } \aiversion{
  \author{ \textbf{\Large Iyad Kanj}$^1$ \and
    \textbf{\Large Stefan Szeider}$^2$\\
    \mbox{}$^1$School of Computing, DePaul University, Chicago, USA\\
    ikanj@cs.depaul.edu \\
    \mbox{}$^2$Institute of Information Systems, Vienna University of
    Technology,
    Vienna, Austria\\
    stefan@szeider.net }
\begin{document}

\maketitle

\begin{abstract}

\aiversion{\begin{quote}}
\noindent  A \msc{CSP} with $n$ variables ranging over a domain of $d$ values
  can be solved by brute-force in $d^n$ steps (omitting a polynomial
  factor). With a more careful approach, this trivial upper bound can
  be improved for certain natural restrictions of the \msc{CSP}.
In
  this paper we establish theoretical limits to such improvements, and
  draw a detailed landscape of the subexponential-time complexity of
  \msc{CSP}.

  We first establish relations between the subexponential-time
  complexity of \msc{CSP} and that of other problems, including
  \msc{CNF-Sat}. We exploit this connection to provide tight
  characterizations of the subexponential-time complexity of \msc{CSP}
  under common assumptions in complexity theory. For several natural
  \msc{CSP} parameters, we obtain threshold functions
  that precisely dictate the subexponential-time complexity of
  \msc{CSP} with respect to the parameters under consideration.

  Our analysis provides fundamental results indicating {\em whether and when} one
  can significantly improve on the brute-force search approach for solving
  \msc{CSP}.
\aiversion{\end{quote}}
 \end{abstract}

\longversion{\thispagestyle{empty}}
\pagestyle{plain}
\section{Introduction}\label{sec:intro}
The Constraint Satisfaction Problems (CSP) provides a general and
uniform framework for the representation and solution of hard
combinatorial problems that arise in various areas of Artificial
Intelligence and Computer Science \cite{RossiVanBeekWalsh06}. For instance, in database theory, the CSP is equivalent to the evaluation
problem of conjunctive queries on relational databases
\cite{GottlobLeoneScarcello02b}.

It is well known that CSP is NP-hard, as it entails fundamental NP-hard
problems such as \msc{3-Colorability} and \msc{3-CNF-Sat}. Hence, we cannot hope for a \emph{polynomial-time algorithm} for
CSP\@. On the other hand, CSP can obviously be solved in
\emph{exponential time}: by simply trying all possible instantiations
of the variables, we can solve a CSP instance consisting of $n$
variables that range over a domain of $d$ values in time $d^n$
(omitting a polynomial factor in the input size). Significant work has been concerned
with improving this trivial upper bound
\cite{FederMotwani02,BeigelEppstein05,GrandoniItaliano06}, in particular, for certain restrictions of CSP.
For instance, binary CSP with domain size $d$ can now be solved in time
$(d-1)^n$ (omitting a polynomial factor in the input size) by a
forward-checking algorithm employing a fail-first variable ordering
heuristic~\cite{Razgon05}. All these improvements over the trivial brute-force search give
exponential running times in which the exponent is linear in~$n$.

The aim of this paper is to investigate the theoretical limits of such
improvements.  More precisely, we explore whether the exponential
factor $d^n$ can be reduced to a \emph{subexponential factor}
$d^{o(n)}$ or not, considering various natural NP-hard restrictions of
the~CSP. We note that the study of the existence of
subexponential-time algorithms is of prime interest, as a
subexponential-time algorithm for a problem would allow us to solve
larger hard instances of the problem in comparison to an
exponential-time algorithm.

\paragraph{Results}

We obtain lower and upper bounds and draw a detailed complexity
landscape of \msc{CSP} with respect to subexponential-time
solvability.  Our lower bounds are subject to (variants of) the
\emph{Exponential Time Hypothesis} (ETH), proposed by
\citex{ImpagliazzoPaturi01}, which states that \msc{3-CNF-Sat} has no
subexponential-time algorithm.

It is easy to see that \msc{CSP} of \emph{bounded domain size} (i.e., the
maximum number of values for each variable) and \emph{bounded arity} (i.e.,
the maximum number of variables that appear together in a constraint)
has a subexponential-time algorithm if and only if the ETH fails. Our
first result provides evidence that when we drop the bound on the domain
size or the bound on the arity, the problem becomes ``harder'' (we
refer to the discussion preceding Proposition~\ref{thm:clique}):

\begin{enumerate}
\item If \msc{Boolean CSP} is solvable in nonuniform subexponential time
  then so is (unrestricted) \msc{CNF-Sat}.
\item If \msc{2-CSP} (all constraints have arity 2) is solvable in
  subexponential time then \msc{Clique} is solvable in time $N^{o(k)}$ ($N$ is the number of vertices and $k$ is the clique-size).
\end{enumerate}
As it turns out, the number of tuples plays an important role in characterizing the subexponential time complexity of \msc{CSP}. We show the following tight result:

\begin{enumerate}
\setcounter{enumi}{2}
\item \msc{CSP} is solvable in subexponential time for instances in which the
number of tuples is $o(n)$, and unless the ETH fails, is not solvable in
subexponential time if the number of tuples in the instances is
$\Omega(n)$.
\end{enumerate}
For Boolean \msc{CSP} of linear size we can even derive an equivalence to the ETH:
\begin{enumerate}
\setcounter{enumi}{3}
\item Boolean \msc{CSP} for instances of size $\Omega(n)$ is solvable in
  subexponential time if and only if the ETH fails.
\end{enumerate}
Results~3\ and 4\ also hold if we consider the total number of tuples in
the constraint relations instead of the input size.

By a classical result of \cite{Freuder90}, \msc{CSP} becomes easier if the
instance has \emph{small treewidth}. There are several ways of measuring
the treewidth of a \msc{CSP} instance, depending on the graph used to
model the structure of the instance. The most common models are the
primal graph and the incidence graph. The former has as vertices the
variables of the \msc{CSP} instance, and two variables are adjacent if
they appear together in a constraint. The incidence graph is the
bipartite graph on the variables and constraints, where a variable is
incident to all the constraints in which it is involved.  We show that
the treewidth of these two graph models give rise to different
subexponential-time complexities:
\begin{enumerate}
\setcounter{enumi}{4}
\item \msc{CSP} is solvable in subexponential time for instances whose primal
  treewidth is $o(n)$, but is not solvable in subexponential time for
  instances whose primal treewidth is $\Omega(n)$, assuming the ETH.
\item \msc{CSP} is solvable in polynomial time for instances whose incidence
  treewidth is $O(1)$, but is not solvable in subexponential time for
  instances whose incidence treewidth is $\omega(1)$ unless the ETH fails.
\end{enumerate}
Our tight results, summarized in the table at the end of this paper,
provide strong theoretical evidence that some of the natural
restrictions of \msc{CSP} may be ``harder than''
\msc{$k$-CNF-Sat}---for which a subexponential-time algorithm would
lead to the failure of the ETH. Hence, our results provide a new point
of view of the relationship between SAT and \msc{CSP}, an important
topic of recent AI research
\cite{JeavonsPetke12,DimopoulosSterigou06,BenhamouParisSigel12,Bennaceur04}.
\aiversion{
 Some
proofs and details are omitted from the current paper due to the lack
of space. A complete version of the paper is available
as~\cite{kanjszeiderarxiv}.
}

\section{Preliminaries}
\label{sec:prelim}

\longversion{\subsection{Constraint satisfiability and CNF-satisfiability}
\label{subsec:prelcsp}}
\aiversion{\paragraph{Constraint satisfiability and CNF-satisfiability}}

An instance $I$ of the \msc{Constraint Satisfaction Problem} (or
\msc{CSP}, for short) is a triple~$(V, D, \mathcal{C})$, where $V$ is a
finite set of \emph{variables}, $D$ is a finite set of \emph{domain
  values}, and $\mathcal{C}$ is a finite set of \emph{constraints}.
Each constraint in~$\mathcal{C}$ is a pair $(S,R)$, where~$S$, the
\emph{constraint scope}, is a non-empty sequence of distinct variables
of~$V$, and $R$, the \emph{constraint relation}, is a relation over~$D$
whose arity matches the length of~$S$; a relation is considered as a set
of tuples.  Therefore, the \emph{size} of a \msc{CSP} instance
$I=(V,D,\CCC)$ is the sum $\sum_{(S, R) \in \mathcal{C}} |S| \cdot |R|$;
the \emph{total number of tuples} is $\sum_{(S, R) \in
  \mathcal{C}} |R|$.
We assume, without loss of generality, that every variable
occurs in at least one constraint scope and every domain element occurs
in at least one constraint relation.  Consequently, the size of an instance $I$ is at least as large as the number of variables in $I$. We write $\var(C)$ for the set of
variables that occur in the scope of
constraint~$C$. 

An \emph{assignment} or \emph{instantiation} is a mapping from the set
$V$ of variables to the domain~$D$. An assignment $\tau$
\emph{satisfies} a constraint $C=((x_1,\dots,x_n),R)$ if
$(\tau(x_1),\dots,\tau(x_n))\in R$, and $\tau$ satisfies the \msc{CSP} instance
if it satisfies all its constraints. An instance~$I$ is
\emph{consistent} or \emph{satisfiable} if it is satisfied by some
assignment.   \msc{CSP}  is the problem of deciding whether a
given instance of \msc{CSP} is consistent. \msc{Boolean CSP}
denotes the \msc{CSP} with the \emph{Boolean domain} $\{0,1\}$. By \msc{$r$-CSP} we denote the restriction of \msc{CSP} to instances in which the arity of each constraint is at most $r$.


For an instance $I=(V,D,\CCC)$ of \msc{CSP} we define the following
basic parameters:
\begin{itemize}
\item $\VARS$: the number $\Card{V}$ of variables, usually denoted by $n$;
\item $\SIZE$: the size of the CSP instance;
\item $\DOM$: the number $\Card{D}$ of values;
\item $\CONS$: the number $\Card{\CCC}$ of constraints;

\end{itemize}
\msc{CNF-Sat} is the satisfiability problem for propositional formulas
in conjunctive normal form (CNF).  \msc{$k$-CNF-Sat} denotes
\msc{CNF-Sat} restricted to formulas where each clause is of width at
most $k$, i.e., contains at most $k$ literals.

\longversion{\subsection{Subexponential time}
\label{subsec:prelsubexp}}
\aiversion{\paragraph{Subexponential time}}

The time complexity functions used in this paper are assumed to be
proper complexity functions that are unbounded and nondecreasing.  The
$o(\cdot)$ notation used denotes the $o^{\mbox{eff}}(\cdot)$ notation
\cite{FlumGrohe06}.  More formally, for any two computable functions $f,
g: \mathbb{N} \rightarrow \mathbb{N}$, by writing $f(n) = o(g(n))$ we
mean that there exists a computable nondecreasing unbounded function
$\mu(n) : \mathbb{N} \rightarrow \mathbb{N}$, and $n_0 \in \mathbb{N}$,
such that $f(n) \leq g(n)/\mu(n)$ for all $n \geq n_0$.

It is clear that \msc{CSP} and \msc{CNF-Sat} are solvable in time
$\DOM^{n} |I|^{O(1)}$ and $2^n |I|^{O(1)}$, respectively, where $I$ is
the input instance and $n$ is the number of variables in $I$.  We say
that the \msc{CSP} (resp. \msc{CNF-Sat}) problem is solvable in {\em
  uniform subexponential time} if there exists an algorithm that
solves the problem in time $\DOM^{o(n)}|I|^{O(1)}$
(resp. $2^{o(n)}|I|^{O(1)}$). Using the results
of~\cite{ChenKanjXia09,FlumGrohe06}, the above definition is
equivalent to the following: The \msc{CSP} (resp. \msc{CNF-Sat})
problem is solvable in {\em uniform subexponential time} if there
exists an algorithm that for all $\epsilon = 1/\ell$, where $\ell$ is
a positive integer, solves the problem in time $\DOM^{\epsilon
  n}|I|^{O(1)}$ (resp. $2^{\epsilon n}|I|^{O(1)}$).  The \msc{CSP}
(resp. \msc{CNF-Sat}) problem is solvable in {\em nonuniform
  subexponential time} if for each $\epsilon = 1/\ell$, where $\ell$
is a positive integer, there exists an algorithm $A_{\epsilon}$ that
solves the problem in time $\DOM^{\epsilon n} |I|^{O(n)}$
(resp. $2^{\epsilon n} |I|^{O(1)}$) (that is, the algorithm depends on
$\epsilon$).  We note that subexponential-time algorithms running in
$O(2^{\sqrt{n}})$ time do exist for many natural
problems~\cite{AlberFernauNiedermeier04}.

Let $Q$ and $Q'$ be two problems, and let $\mu$ and $\mu'$ be two
parameter functions defined on instances of $Q$ and $Q'$,
respectively. In the case of \msc{CSP} and \msc{CNF-Sat}, $\mu$ and
$\mu'$ will be the number of variables in the instances of these
problems. A {\em subexponential-time Turing reduction
  family}
\longversion{\cite{ImpagliazzoPaturiZane01} (see also~\cite{FlumGrohe06}),
  shortly a serf-reduction\footnote{Serf-reductions were
      introduced by Impagliazzo et
      al.~\cite{ImpagliazzoPaturiZane01}. Here we use the definition given
      by Flum and Grohe~\cite{FlumGrohe06}. There is a slight difference
      between the two definitions, and the latter definition is more
      flexible for our purposes.},
  }
\aiversion{\cite{ImpagliazzoPaturiZane01,FlumGrohe06},
  shortly a \emph{serf-reduction},
}
is an algorithm $A$ with an oracle to $Q'$ such that there are
computable functions $f, g: \Nat \longrightarrow \Nat$
satisfying: (1) given a pair $(I, \epsilon)$ where $I \in Q$ and
$\epsilon = 1/\ell$ ($\ell$ is a positive integer), $A$ decides $I$ in
time $f(1/\epsilon)\DOM^{\epsilon \mu (I)} |I|^{O(1)}$ (for
\msc{CNF-Sat} $\DOM =2$); and (2) for all oracle queries of the form
``$I' \in Q'$'' posed by $A$ on input $(I, \epsilon)$, we have $\mu'(I')
\leq g(1/\epsilon)(\mu(I) + \log{|I|})$.

The optimization class SNP consists of all search problems expressible
by second-order existential formulas whose first-order part is universal
\cite{PapadimitriouYannakakis91}. \citex{ImpagliazzoPaturiZane01}
introduced the notion of {\em completeness} for the class SNP under
serf-reductions, and identified a class of problems which are complete
for SNP under serf-reductions, such that the subexponential-time
solvability for any of these problems implies the subexponential-time
solvability of all problems in SNP\@.  Many well-known NP-hard problems
are proved to be complete for SNP under the serf-reduction, including
\msc{$3$-Sat}, \msc{Vertex Cover}, and \msc{Independent Set}, for which
extensive efforts have been made in the last three decades to develop
subexponential-time algorithms with no success. This fact has led to the
{\it exponential-time hypothesis}, ETH, which is equivalent to the
statement that not all SNP problems are solvable in subexponential-time:

\smallskip
\noindent\fbox{\begin{minipage}{0.97\linewidth}
{\it Exponential-Time Hypothesis} (ETH): \ \
The problem \msc{$k$-CNF-Sat}, for any $k \geq 3$, cannot be solved in time $2^{o(n)}$,
where $n$ is the number of variables in the input formula. Therefore, there exists $c > 0$ such that \msc{$k$-CNF-Sat} cannot be solved in time $2^{cn}$.
\end{minipage}}
\smallskip

The following result is implied from
\cite[Corollary~1]{ImpagliazzoPaturiZane01} and from the proof of the
Sparsification Lemma~\cite{ImpagliazzoPaturiZane01}, \cite[Lemma
16.17]{FlumGrohe06}.

\begin{lemma}
\label{lem:subexpnm}
\msc{$k$-CNF-Sat} ($k \geq 3$) is solvable in $2^{o(n)}$ time if and only if \msc{$k$-CNF-Sat} with a linear number of clauses and in which the number of occurrences of each variable is upper bounded by a constant is solvable in time $2^{o(n)}$, where $n$ is the number of variables in the formula (note that the size of an instance of \msc{$k$-CNF-Sat} is polynomial in~$n$).
\end{lemma}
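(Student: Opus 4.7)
The plan is to establish the two directions separately. The forward direction is immediate: the restricted variant is a special case of general \msc{$k$-CNF-Sat}, so any algorithm running in $2^{o(n)}$ on arbitrary $k$-CNF formulas also handles the restricted instances within the same time bound. All the work is in the converse direction.

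For the backward direction, the plan is to combine the Sparsification Lemma of \citex{ImpagliazzoPaturiZane01} with a simple variable-splitting gadget. Given an arbitrary $k$-CNF formula $F$ on $n$ variables and a rational $\epsilon = 1/\ell$, the Sparsification Lemma produces, in time $2^{\epsilon n} n^{O(1)}$, a family $F_1, \ldots, F_t$ of at most $t \leq 2^{\epsilon n}$ $k$-CNF formulas on the same variable set such that $F$ is satisfiable if and only if at least one $F_i$ is, and such that each $F_i$ has at most $c(\epsilon, k) \cdot n$ clauses for a computable function $c$. This already delivers the linear-clause restriction.

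To additionally enforce bounded variable occurrence in each $F_i$, I replace every variable $x$ that occurs $d_x$ times in $F_i$ by $d_x$ fresh copies $x^{1}, \ldots, x^{d_x}$, assign the $j$-th occurrence of $x$ to $x^{j}$, and insert the binary clauses $\lnot x^{j} \lor x^{j+1}$ for $j = 1, \ldots, d_x - 1$ together with $\lnot x^{d_x} \lor x^{1}$ (each viewed as a $k$-clause by padding, or left as width-$2$ clauses which are valid $k$-clauses for $k \geq 3$). These clauses force all copies to take the same truth value, so the transformed formula $F_i'$ is equi-satisfiable with $F_i$. Since each $F_i$ has at most $c(\epsilon, k) \cdot n$ clauses of width at most $k$, the total number of literal occurrences, and hence the number of new variables $n_i'$, is bounded by $k \cdot c(\epsilon, k) \cdot n = O(n)$. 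Moreover, every copy $x^j$ occurs at most three times in $F_i'$ (once from the original occurrence and at most twice from the chain), so $F_i'$ lies in the restricted class.

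Running the hypothesized $2^{o(n')}$ algorithm on each $F_i'$ and taking the disjunction yields total running time $2^{\epsilon n} \cdot 2^{o(n)}$, and a standard diagonalization over $\epsilon = 1/\ell$ (as in \cite[Lemma 16.17]{FlumGrohe06}) turns this into a $2^{o(n)}$ algorithm for general \msc{$k$-CNF-Sat}. The main obstacle in the argument is the Sparsification Lemma itself, which we invoke as a black box; once sparsification is granted, the variable-splitting gadget is routine and preserves both linearity of the clause count and the $2^{o(n)}$ bound because $n' = O(n)$.
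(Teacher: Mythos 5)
Your argument is correct, and its core is the same as the paper's: the paper gives no written proof at all, deferring to \cite[Corollary~1]{ImpagliazzoPaturiZane01} and to the \emph{proof} of the Sparsification Lemma, from which one reads off that the sparsified formulas already have each variable occurring in at most $c(\epsilon,k)$ clauses, so both the linear-clause and the bounded-occurrence restrictions come for free. You instead use only the black-box \emph{statement} of sparsification (linearly many clauses) and then enforce bounded occurrence with the standard cyclic implication chain $x^1\rightarrow x^2\rightarrow\cdots\rightarrow x^{d_x}\rightarrow x^1$; this is a legitimate and slightly more self-contained route, at the cost of introducing $n'\leq k\cdot c(\epsilon,k)\cdot n$ fresh variables. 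The one point that deserves explicit care in your version is that this blowup factor depends on $\epsilon$: for each fixed $\epsilon$ the hypothesized $2^{o(n')}$ algorithm yields time $2^{\epsilon n}\cdot 2^{C(\epsilon)n/\mu(C(\epsilon)n)}$, which is only below $2^{2\epsilon n}$ once $n$ exceeds some $n_0(\epsilon)$, and moreover the target class (linear in $n'$ with constant depending on $\epsilon$, occurrence bound $3$) varies with $\epsilon$, so the hypothesis must be read as ``for every choice of the linear and occurrence constants''---which is indeed how the paper uses the lemma. Your appeal to the uniform-subexponential diagonalization of \cite[Lemma~16.17]{FlumGrohe06} handles this, so the proof goes through.
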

The ETH has become a standard hypothesis in complexity theory~\cite{LokshtanovMarxSuarabh11}.

We close this section by mentioning some further work on the
subexponential-time complexity of \msc{CSP}.
There are several results on \msc{2-CSP} with bounds on $\TW$, the
treewidth of the primal graph
\longversion{(see Section  \ref{sec:constraints} for definitions).}
\aiversion{(see the Introduction section for definitions).}
\citex{LokshtanovMarxSuarabh11} showed the following lower bound,
using a result on list coloring \cite{FellowsEtAl11}: \msc{2-CSP}
cannot be solved in time $f(\TW) n^{o(\TW)}$ unless the ETH fails.
\citex{Marx10b} showed that if there is a recursively enumerable class
${\cal G}$ of graphs with unbounded treewidth and a function $f$ such
that $\msc{2-CSP}$ can be solved in time $f(G)n^{o(\TW/\log \TW)}$ for
instances whose primal graph is in ${\cal G}$, then the ETH fails.
\citex{Traxler08} studied the subexponential-time complexity of
CSP where the constraints are represented by listing the forbidden
tuples (in contrast to the standard representation that we use, where
the allowed tuples are given, and which naturally captures database
problems
\cite{GottlobLeoneScarcello02b,Grohe06,PapadimitriouYannakakis99}). This
setting can be considered as a generalisation of \msc{CNF-Sat}; a single
clause gives rise to a constraint with exactly one forbidden tuple.

\section{Relations between \msc{CSP} and \msc{CNF-Sat}}
\label{sec:sattocsp}
In this section, we investigate the relation between the subexponential-time complexity of \msc{CSP} and that of \msc{CNF-Sat}.
A clause of constant width can be represented by a constraint of
constant arity; the reverse holds as well (we get a constant number of clauses). Hence, we have:
\begin{proposition}
\label{thm:boundedarity}
\msc{Boolean $r$-CSP} is solvable in subexponential time if and only if the ETH fails.
\end{proposition}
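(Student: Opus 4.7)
The plan is to give two matching serf-reductions, following exactly the hint in the paragraph preceding the statement. Since $r$ is a fixed constant, each direction only blows up the size by a factor of at most $2^r = O(1)$, keeps the variable count identical, and hence preserves subexponential-time solvability in the number $n$ of variables.

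For the ``if'' direction, I would assume the ETH fails. By the equivalence of subexponential-time solvability across different arities (Lemma~\ref{lem:subexpnm} together with the known serf-equivalence of \msc{$k$-CNF-Sat} for $k\ge3$), \msc{$r$-CNF-Sat} is then solvable in time $2^{o(n)}|F|^{O(1)}$. Given a \msc{Boolean $r$-CSP} instance $I=(V,\{0,1\},\CCC)$ on $n$ variables, I build a CNF formula $F_I$ on the same variable set $V$ as follows: for each constraint $C=(S,R)\in\CCC$ with $|S|\le r$, and for each forbidden tuple $t\in\{0,1\}^{|S|}\setminus R$, add the unique clause of width $|S|$ whose only falsifying assignment (on $\var(C)$) is $t$. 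Each constraint contributes at most $2^{r}$ clauses of width $\le r$, so $F_I$ is an \msc{$r$-CNF} formula whose size is polynomial in $|I|$, and $F_I$ is satisfiable iff $I$ is consistent. Running the assumed \msc{$r$-CNF-Sat} algorithm on $F_I$ solves $I$ in time $2^{o(n)}|I|^{O(1)}$, i.e., \msc{Boolean $r$-CSP} is in subexponential time.

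For the ``only if'' direction, I would assume \msc{Boolean $r$-CSP} is solvable in subexponential time and derive a subexponential algorithm for \msc{$r$-CNF-Sat}, from which the failure of the ETH (for $r\ge 3$) follows directly from the definition. Given an \msc{$r$-CNF-Sat} instance $F$ on $n$ variables, I build a \msc{Boolean $r$-CSP} instance $I_F=(V,\{0,1\},\CCC)$ on the same variable set $V$: for each clause $c$ of $F$, introduce one constraint whose scope $S_c$ lists the variables of $c$ and whose relation $R_c$ consists of the at most $2^{|S_c|}-1\le 2^r-1$ Boolean tuples on $S_c$ that satisfy $c$. Then $F$ is satisfiable iff $I_F$ is consistent; the number of variables is unchanged, and $|I_F|=O(|F|)$. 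Running the assumed algorithm on $I_F$ decides $F$ in time $2^{o(n)}|F|^{O(1)}$, so \msc{$r$-CNF-Sat} is subexponential, contradicting the ETH when $r\ge 3$.

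Neither direction contains any real obstacle; the only thing to check carefully is that each construction is indeed a serf-reduction, i.e., that in the reduction from $(I,\epsilon)$ the produced query instance $I'$ satisfies $\mu'(I')\le g(1/\epsilon)(\mu(I)+\log|I|)$. This is immediate because the number of variables is preserved exactly and the size grows by a constant factor depending only on $r$, so the $\epsilon$-slicing transfers verbatim. The mildest subtlety is justifying that ``ETH fails'' may be used as ``\msc{$r$-CNF-Sat} is subexponential'' for whatever fixed $r$ appears in the statement; this is standard and was already recorded via Lemma~\ref{lem:subexpnm} and the SNP-completeness results of \citex{ImpagliazzoPaturiZane01}.
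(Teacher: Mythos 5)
Your proof is correct and follows essentially the same route as the paper, which only sketches it in one line: the natural clause-to-constraint translation (at most $2^r$ tuples per clause) in one direction and the forbidden-tuple-to-clause translation (at most $2^r$ clauses of width at most $r$ per constraint) in the other, both preserving the variable set and blowing up the size by a constant factor depending only on $r$. The only cosmetic remark is that the cross-arity equivalence you invoke is really the serf-equivalence of \msc{$k$-CNF-Sat} for $k\ge 3$ from \citex{ImpagliazzoPaturiZane01} (the Sparsification Lemma) rather than Lemma~\ref{lem:subexpnm} itself, but you cite that source as well, so the argument stands.
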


The following proposition suggests that
Proposition~\ref{thm:boundedarity} may not extend to \msc{$r$-CSP} with
unbounded domain size. Chen et al.~\cite{ChenEtal05} showed that if \msc{Clique}
(decide whether a given a graph on $N$ vertices contains a complete
subgraph of $k$ vertices) is solvable in time $N^{o(k)}$ then the ETH
fails. The converse, however, is generally believed not to be true.
The idea behind the proof of the proposition goes back to the paper
by \citex{PapadimitriouYannakakis99}, where
they used it in the context of studying the complexity of database
queries. We skip the proof, and refer the reader
to the original source~\cite{PapadimitriouYannakakis99}.

\begin{proposition}
\label{thm:clique}
If \msc{2-CSP} is solvable in subexponential time then \msc{Clique} is
solvable in time $N^{o(k)}$.
\end{proposition}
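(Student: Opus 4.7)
The plan is to encode an instance of \msc{Clique} as a compact \msc{2-CSP} instance whose number of variables equals the clique size $k$ and whose domain equals the vertex set of the graph. A subexponential-time algorithm of the form $\DOM^{o(n)}|I|^{O(1)}$ for \msc{2-CSP} then translates directly into an $N^{o(k)}$-time algorithm for \msc{Clique}, giving the conclusion.

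Concretely, given a graph $G=(V,E)$ with $|V|=N$ and a target clique size $k$, I would build a \msc{2-CSP} instance $I_{G,k}$ with variable set $\{x_1,\ldots,x_k\}$ and domain $D=V$. For every pair $1\le i<j\le k$ I would add a binary constraint with scope $(x_i,x_j)$ and relation $R_{i,j}=\SB (u,v)\in V\times V \SM \{u,v\}\in E\SE$, so satisfying this constraint forces $x_i$ and $x_j$ to be adjacent, and in particular distinct. A satisfying assignment picks $k$ pairwise adjacent, hence distinct, vertices, i.e.\ a $k$-clique of $G$; conversely every $k$-clique yields such an assignment by fixing any ordering of its vertices.

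Applying the hypothesised algorithm for \msc{2-CSP} to $I_{G,k}$, I note that $I_{G,k}$ has $n=k$ variables, $\DOM=N$, and total size $|I_{G,k}|=O(k^2 N^2)$, which is polynomial in $N$. Hence the running time is bounded by $\DOM^{o(n)}|I_{G,k}|^{O(1)} = N^{o(k)}\cdot (k^2N^2)^{O(1)} = N^{o(k)}$, which is the claimed bound on \msc{Clique}.

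The only genuine subtlety lies in the interpretation of $o(\cdot)$, which in this paper is the effective $o^{\text{eff}}(\cdot)$ from the preliminaries: one must check that the witness function $\mu$ certifying $\DOM^{o(n)}$ for the \msc{2-CSP} algorithm also serves as a witness for $N^{o(k)}$. This is immediate, because the reduction is parameter-preserving: the number of variables $n$ in $I_{G,k}$ is literally $k$, and the domain size is literally $N$, so no rescaling of the witness $\mu$ is needed. The extra polynomial factor $|I_{G,k}|^{O(1)} = N^{O(1)}$ is absorbed into $N^{o(k)}$ for unbounded $k$, while the case of bounded $k$ is handled in polynomial time by brute force; thus the bound $N^{o(k)}$ holds uniformly, completing the argument.
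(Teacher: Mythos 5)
Your proof is correct and follows essentially the same route the paper takes (the Papadimitriou--Yannakakis reduction it cites): variables $x_1,\dots,x_k$ for the clique positions, domain equal to the vertex set, and a binary edge constraint for each pair, so that $n=k$ and $\DOM=N$ make the hypothesized $\DOM^{o(n)}|I|^{O(1)}$ bound collapse to $N^{o(k)}$. Your handling of the polynomial factor and the effective-$o(\cdot)$ subtlety is also sound.
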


We explore next the relation between \msc{Boolean CSP} with unbounded
arity and \msc{CNF-Sat}. We show that if \msc{Boolean CSP} is solvable
in nonuniform subexponential time then so is \msc{CNF-Sat}. To do so, we
exhibit a nonuniform subexponential-time Turing reduction from \msc{CNF-Sat} to
\msc{Boolean CSP}.

Intuitively, one would try to reduce an instance $F$
of \msc{CNF-Sat} to an instance $I$ of \msc{CSP} by associating with
every clause in $F$ a constraint in $I$ whose variables are the
variables in the clause, and whose relation consists of all tuples that
satisfy the clause. There is a slight complication in such an
attempted reduction because the number of tuples in a constraint could
be exponential if the number of variables in the corresponding clause is
linear (in the total number of variables). To overcome this subtlety,
the idea is to first apply a subexponential-time (Turing) reduction,
which is originally due to \citex{Schuler05} and was also used and
analyzed by \citex{CalabroImpagliazzoPaturi06}, that reduces the
instance $F$ to subexponentially-many (in $n$) instances in which the
width of each clause is at most some constant $k$; in our case, however,
we will reduce the width to a suitable nonconstant value. We follow
this reduction with the reduction to \msc{Boolean CSP} described
above.

\begin{theorem}\label{thm:sattocsp}
If \msc{Boolean CSP} has a nonuniform subexponential-time algorithm then so does \msc{CNF-Sat}.
\end{theorem}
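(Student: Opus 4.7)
The plan is to construct, assuming a nonuniform subexponential-time algorithm for \msc{Boolean CSP}, a corresponding nonuniform subexponential-time algorithm for \msc{CNF-Sat} via a two-stage Turing reduction. Given a CNF formula $F$ with $n$ variables and $m$ clauses together with a target $\epsilon>0$, the first stage applies the width-reduction of \citex{Schuler05}, as sharpened by \citex{CalabroImpagliazzoPaturi06}, with a parameter $k=k(\epsilon,F)$ to be chosen below. At each step the procedure picks a clause $C$ with $|C|>k$, fixes $k$ of its literals, and branches into the $k$ cases in which one of those literals is true (trivially satisfying $C$) together with the one case in which all of them are false (shortening $C$ by $k$ while assigning $k$ variables). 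The procedure terminates when every clause has width at most~$k$, producing a family $\mathcal{F}$ of $k$-CNF formulas on the same $n$ variables such that $F$ is satisfiable if and only if some $F'\in\mathcal{F}$ is; the sharpened analysis yields $|\mathcal{F}|\le 2^{\delta(k)n}$ with $\delta(k)=O((\log m)/k)$.

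The second stage encodes each $F'\in\mathcal{F}$ as a \msc{Boolean CSP} instance $I_{F'}$: for every clause $C$ of $F'$, introduce a constraint with scope $\var(C)$ whose relation enumerates the at most $2^{|C|}\le 2^k$ satisfying assignments of~$C$. Each $I_{F'}$ has the same $n$ variables as $F$ and size $O(m\cdot k\cdot 2^k)$. Invoking the assumed algorithm $A_{\epsilon/2}$ on each $I_{F'}$ takes $2^{(\epsilon/2)n}\,|I_{F'}|^{O(1)}$ time, so the total cost of the whole procedure is $|\mathcal{F}|\cdot 2^{(\epsilon/2)n}\,|I_{F'}|^{O(1)}$.

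The key parameter choice is $k$ nonconstant, of order $k=c(\epsilon)\log m$ with $c(\epsilon)$ large enough that $\delta(k)\le\epsilon/2$; then $|\mathcal{F}|\le 2^{(\epsilon/2)n}$ and simultaneously $2^k=m^{O(1/\epsilon)}$ remains polynomial in $m\le|F|$, so each $|I_{F'}|$ is polynomial in $|F|$ (with a constant depending on~$\epsilon$). Multiplying, the total running time becomes $2^{\epsilon n}|F|^{O(1)}$, which matches the required form (the exponent of $|F|$ is allowed to depend on~$\epsilon$ in the nonuniform setting). The main obstacle is the Schuler/CIP counting argument that underlies the bound on $|\mathcal{F}|$; once that quantitative bound is in hand, the remaining pieces---verifying that each oracle query is on the same $n$ variables as $F$, so that the variable-count condition of a serf-reduction is automatic, and composing the running times---are routine.
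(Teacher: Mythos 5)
Your overall architecture is the one the paper uses: a Schuler-style width reduction producing a family $\mathcal{F}$ of bounded-width formulas, followed by the clause-to-constraint encoding and a call to the assumed \msc{Boolean CSP} algorithm with parameter $\epsilon/2$. Your parameter choice $k=\Theta_{\epsilon}(\log m)$ differs from the paper's $k=\lfloor \epsilon n/(2(1+c'))\rfloor$, but it is a legitimate variant: the paper takes $k$ linear in $n$, so a root-leaf path has only $O_{\epsilon}(1)$ right branches and the tree has $m^{O_\epsilon(1)}$ leaves, at the price of CSP instances of size $2^{\Theta(\epsilon n)}$ (which is exactly why its $k$ must be tuned against the exponent $c'$ of the assumed algorithm); your choice keeps the instances polynomial in $|F|$ at the price of a genuinely subexponential number of leaves. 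Either way the product is $2^{\epsilon n}|F|^{O_\epsilon(1)}$, and you correctly locate the nonuniformity in the $\epsilon$-dependent polynomial exponent.

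The gap is in the branching step itself, which you describe as a $(k+1)$-way branch: $k$ branches in which one chosen literal of $C$ is set true (satisfying $C$ and fixing a variable), plus one branch setting all $k$ literals false. That procedure does not satisfy the bound $|\mathcal{F}|\le 2^{\delta(k)n}$ with $\delta(k)=O((\log m)/k)$ that you invoke, and that bound is, as you note, the crux. A root-leaf path with $a$ literal-true steps contributes a factor $k^{a}$ for the choice of which literal is set true at each such step, and $a$ can be as large as about $\min(m,n-k)$; already for $m=n$ and clauses of width $n$ the tree has at least $k^{\,n-k}=2^{(n-k)\log k}=2^{\Omega(n\log\log m)}$ leaves, which exceeds $2^{\epsilon n}$ for every fixed $\epsilon$ (making the $k$ true-branches mutually exclusive does not help: the branching vector $(1,2,\dots,k,k)$ on the variable count has its root tending to $2$). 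The version that achieves the bound---the one used by Schuler, by Calabro et al., and by the paper---is a \emph{two-way} branch whose ``satisfied'' side fixes no variable at all: it merely replaces $C$ by the width-$k$ clause $(l_1\vee\cdots\vee l_k)$. The count $\sum_{\ell\le n/k}\binom{m+\ell}{\ell}$ rests precisely on the fact that left branches are charged to the clause count $m$ rather than to the variable count, so no $k^{a}$ factor appears; your reduction needs to be restated in that form before the rest of your (otherwise correct) assembly goes through.
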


\begin{proof}
Suppose that \msc{Boolean CSP} is solvable in nonuniform subexponential time. Then for every $\delta > 0$, there exists an algorithm $A'_{\delta}$ that, given an instance $I$ of \msc{Boolean CSP} with $n'$ variables, $A'_{\delta}$ solves $I$ in time $2^{\delta n'}|I|^{c'}$, for some constant $c' > 0$.

Let $0 < \epsilon < 1$ be given. 
We describe an algorithm $A_{\epsilon}$ that solves \msc{CNF-Sat} in time $2^{\epsilon n} m^{O(1)}$. 
Set $k = \lfloor \frac{\epsilon n}{2(1+c')} \rfloor$. Let $F$ be an instance of \msc{CNF-Sat} with $n$ variables and $m$ clauses.
The algorithm $A_{\epsilon}$ is a search-tree algorithm, and works as follows. The algorithm picks a clause $C$ in $F$ of width more than $k$; if no such clause exists the algorithm stops. Let $l_1, \ldots, l_k$ be any $k$ literals in $C$. The algorithm branches on $C$ into two branches. The first branch, referred to as a {\em left branch}, corresponds to one of these $k$ literals being assigned the value 1 in the satisfying assignment sought, and in this case $C$ is replaced in $F$ by the clause $(l_1 \vee \ldots \vee l_k)$, thus reducing the number of clauses in $F$ of width more than $k$ by~1. The second branch, referred to as a {\em right branch}, corresponds to assigning all those $k$ literals the value 0 in the satisfying assignment sought; in this case the values of the variables corresponding to those literals have been determined, and the variables can be removed from $F$ and $F$ gets updated accordingly. Therefore, in a right branch the number of variables in $F$ is reduced by $k$. The execution of the part of the algorithm described so far can be depicted by a binary search tree whose leaves correspond to instances resulting from $F$ at the end of the branching, and in which each clause has width at most $k$. The running time of this part of the algorithm is proportional to the number of leaves in the search tree, or equivalently, the number of root-leaf paths in the search tree. Let $F'$ be an instance resulting from $F$ at a leaf of the search tree. We reduce $F'$ to an instance $I_{F'}$ of \msc{Boolean CSP} as follows. For each clause $C'$ in $F'$, we correspond to it a constraint whose variable-set is the set of variables in $C'$, and whose tuples consist of at most $2^k-1$ tuples corresponding to all assignments to the variables in $C'$ that satisfy $C'$. Clearly, $I_{F'}$ can be constructed in time $2^k m^{O(1)}$ (note that the number of clauses in $F'$ is at most $m$). To the instance $I_{F'}$, we apply the algorithm $A'_{\delta}$ with $\delta = \epsilon/2$. The algorithm $A_{\epsilon}$ accepts $F$ if and only if $A'_{\delta}$ accepts one of the instances $I_{F'}$, for some $F'$ resulting from $F$ at a leaf of the search tree.

The running time of $A_{\epsilon}$ is upper bounded by the number of
leaves in the search tree, multiplied by a polynomial in the length of
$F$ (polynomial in $m$) corresponding to the (maximum) total running
time along a root-leaf path in the search tree, multiplied by the time
to construct the instance $I_{F'}$ corresponding to $F'$ at a leaf of
the tree, and multiplied by the running time of the algorithm
$A'_{\delta}$ applied to $I_{F'}$. Note that the binary search tree depicting
the execution of the algorithm is not a complete binary tree. To upper bound the size of the search tree, let $P$ be a root-leaf path in the
search tree, and let $\ell$ be the number of right branches along
$P$. Since each right branch removes $k$ variables, $\ell \leq n/k$ and
the number of variables left in the instance $F'$ at the leaf endpoint
of $P$ is $n- \ell k$. Noting that the length of a path with $\ell$ right branches
is at most $m + \ell$ (each left branch reduces $m$ by 1 and hence there
can be at most $m$ such branches on $P$, and there are $\ell$ right
branches), we conclude that the number of root-leaf paths, and hence the number of leaves, in the search
tree is at most $\sum_{\ell=0}^{\lceil n/k \rceil} {m+\ell \choose \ell}$.

The reduction from $F'$ to an instance of
\msc{Boolean CSP} can be carried out in time $2^k m^{O(1)}$, and results
in an instance $I_{F'}$ in which the number of variables is at most
$n'=n- \ell k$, the number of constraints is at most $m$, and the total
size is at most $2^k m^{O(1)}$. Summing over all possible paths in the search
tree, the running time of $A_{\epsilon}$ is $2^{\epsilon n}
m^{O(1)}$.
\longversion{
This is a consequence of the following estimation:
\begin{eqnarray}
   \sum_{\ell=0}^{\lceil n/k \rceil} {m+\ell \choose \ell} 2^k m^{O(1)} \cdot 2^{\delta(n-\ell k)}. (2^k m^{O(1)})^{c'}   &\leq & 2^{(1+c')k+\delta n} m^{O(1)} \sum_{\ell=0}^{\lceil n/k\rceil } {m+\lceil n/k \rceil \choose \ell} \nonumber \\
    & \leq  & 2^{(1+c')k+\delta n} m^{O(1)} {2m \choose \lceil n/k\rceil} \label{ineq2} \\
    & \leq & 2^{(1+c')k+\delta n} m^{O(1)} \cdot (2m)^{n/k} \label{ineq3} \\
    & \leq & 2^{(1+c')k+\delta n} m^{O(1)} \label{ineq5} \\
    & \leq &  2^{\epsilon n} m^{O(1)}. \nonumber
\end{eqnarray}
The first inequality follows after replacing $\ell$ by the larger value $\lceil n/k \rceil$ in the upper part of the binomial coefficient, and upper bounding the term $2^{-\ell \delta k}$ by 1.
Inequality (\ref{ineq2}) follows from the fact that the largest binomial coefficient in the summation is ${m + \lceil n/k \rceil \choose \lceil n/k \rceil} \leq {2m \choose \lceil n/k \rceil}$ ($m \geq  \lceil n/k \rceil$, otherwise $m$ is a constant, and the instance of \msc{CNF-Sat} can be solved in polynomial time from the beginning), and hence, the summation can be replaced by the largest binomial coefficient multiplied by the number of terms ($\lceil n/k \rceil$+1) in the summation, which gets absorbed by the term $m^{O(1)}$. Inequality (\ref{ineq3}) follows from the trivial upper bound on the binomial coefficient (the ceiling can be removed because polynomials in $m$ get absorbed).
Inequality (\ref{ineq5}) follows after noting that $n/k$ is a constant (depends on $\epsilon$), and after substituting $k$ and $\delta$ by their values/bounds.
}

It follows that the algorithm $A_{\epsilon}$ solves \msc{CNF-Sat} in time $2^{\epsilon n} m^{O(1)}$. Therefore, if \msc{Boolean CSP} has a nonuniform subexponential-time algorithm, then so does \msc{CNF-Sat}. The algorithm is nonuniform because the polynomial factor in the running time (exponent of $m$) depends on $\epsilon$.
\end{proof}

\section{Instance size and number of tuples}
\label{sec:size}
In this section we give characterizations of the subexponential-time complexity of \msc{CSP} with respect to the instance size and the number of tuples. Recall that the size of an instance $I =(V, D, \mathcal{C})$ of \msc{CSP} is $\SIZE = \sum_{(S, R) \in \mathcal{C}} |S| \cdot |R|$. We also show that the subexponential-time solvability of \msc{Boolean CSP} with linear size, or linear number of tuples, is equivalent to the statement that the ETH fails.

\begin{lemma}
\label{lem:hardnesssize}
Unless the ETH fails,  \msc{Boolean CSP} is not solvable in
subexponential-time if the instance size is $\Omega(n)$.
\end{lemma}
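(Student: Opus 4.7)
The plan is a straightforward serf-style reduction from \msc{$k$-CNF-Sat} to \msc{Boolean CSP} that keeps the number of variables unchanged while forcing the resulting instance to have linear size. The starting point is Lemma~\ref{lem:subexpnm}: it suffices to prove that a subexponential-time algorithm for \msc{Boolean CSP} on instances of size $\Omega(n)$ would yield a $2^{o(n)}$-algorithm for \msc{$k$-CNF-Sat} restricted to formulas with a linear number of clauses and bounded variable-occurrence (for any fixed $k \geq 3$), contradicting the ETH.

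Given such a formula $F$ on variable-set $V$ with $|V|=n$ and $m=O(n)$ clauses, I would build the \msc{Boolean CSP} instance $I_F = (V, \{0,1\}, \CCC)$ by replacing each clause $C$ of $F$ (of width at most $k$) with a constraint whose scope is the at-most-$k$ variables occurring in $C$ and whose relation consists of the (at most $2^k - 1$) Boolean assignments to those variables that satisfy $C$. A standard check shows that an assignment satisfies $F$ if and only if it satisfies $I_F$, so $F$ is satisfiable iff $I_F$ is consistent. Moreover, each such constraint contributes at most $k \cdot (2^k-1) = O(1)$ to the size, so the total size of $I_F$ is $\SIZE(I_F) = O(m) = O(n)$, confirming that $I_F$ is an instance of \msc{Boolean CSP} whose size is $\Theta(n)$.

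Since the variable set is preserved, the number of variables of $I_F$ is exactly $n$. If \msc{Boolean CSP} restricted to size-$\Omega(n)$ instances admitted an algorithm running in time $2^{o(n)} |I|^{O(1)}$, then applying it to $I_F$ would decide $F$ in time $2^{o(n)} n^{O(1)} = 2^{o(n)}$. By Lemma~\ref{lem:subexpnm}, this would imply that \msc{$k$-CNF-Sat} is solvable in $2^{o(n)}$ time, contradicting the ETH. The reduction itself runs in polynomial time, so the contradiction goes through.

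The only delicate point is to make sure the $\Omega(n)$ size condition is genuinely met rather than exceeded in a trivial way; this is automatic because every variable of $V$ appears in some clause of $F$ (otherwise it could be removed), hence in some constraint of $I_F$, which together with the $O(1)$ per-constraint size and $m=O(n)$ clauses pins the size of $I_F$ to $\Theta(n)$. I do not expect any real obstacle here—the main subtlety is simply invoking Lemma~\ref{lem:subexpnm} to begin with formulas of linearly many clauses, without which the constructed CSP instance would have superlinear size and the argument would fail.
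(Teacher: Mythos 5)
Your core reduction (sparsify via Lemma~\ref{lem:subexpnm}, then turn each width-$\le k$ clause into a constraint listing its satisfying tuples) is exactly the paper's, but there is a gap in how you handle the size condition, and your ``only delicate point'' is aimed in the wrong direction. The lemma, read together with its companion upper bound (\msc{CSP} is in subexponential time when the size is $o(n)$) and with the paper's own proof, quantifies over threshold functions that \emph{upper-bound} the size: for \emph{every} complexity function $s(n)=\Omega(n)$, i.e.\ $s(n)\ge cn$ for some constant $c>0$ that may be arbitrarily small, the restriction of \msc{Boolean CSP} to instances of size \emph{at most} $s(n)$ has no subexponential-time algorithm unless the ETH fails. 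Your reduction produces an instance $I_F$ of size at most $Cn$ where $C$ is a fixed constant of roughly $k(2^k-1)$ times the clause density coming from sparsification. If $c<C$, the instance $I_F$ simply does not lie in the restricted class of instances of size at most $cn$, so the hypothetical algorithm for that class cannot be invoked, and your argument does not go through. Worrying that the size might fail to be \emph{at least} linear is moot --- the paper notes that every instance has size at least $n$ --- and it is not the condition you need.

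The missing ingredient is the padding argument the paper uses: given the target constant $c$, first pad the sparsified \msc{3-CNF-Sat} formula with dummy variables so that the number of clauses is at most $cn'/24$ relative to the new variable count $n'$ (this preserves ETH-hardness, since $n'=O(n)$). Each width-$3$ clause then yields a constraint of size at most $3\cdot 8=24$, so $\SIZE(I_F)\le cn'$ and $I_F$ genuinely belongs to the restricted class. With that one addition your proof matches the paper's.
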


\begin{proof}
Let $s(n) = \Omega(n) \geq cn$ be a complexity function, where $c > 0$ is a constant. Suppose that the restriction of \msc{CSP} to instances of size at most $s(n)$ is solvable in subexponential time, and we will show that \msc{3-CNF-Sat} is solvable in subexponential time. By Lemma~\ref{lem:subexpnm}, it is sufficient to show that \msc{3-CNF-Sat} with a linear number of clauses is solvable in $2^{o(n)}$ time. Using a padding argument, we can prove the preceding statement assuming any linear upper bound on the number of clauses; we pick this linear upper bound to be $cn/24$, where $c$ is the constant in the upper bound on $s(n)$.

Let $F$ be an instance of \msc{3-CNF-Sat} with $n$ variables and at most $cn/24$ clauses. We reduce $F$ to an instance $I_F$ of \msc{Boolean CSP} using the same reduction described in the proof of Theorem~\ref{thm:sattocsp}: for each clause $C$ of $F$ we correspond a constraint whose variables are those in $C$ and whose tuples are those corresponding to the satisfying assignments to $C$. Since the width of $C$ is 3 and the number of clauses is at most $cn/24$, the instance $I_{F}$ consists of at most $cn/24$ constraints, each containing at most 3 variables and 8 tuples. Therefore, the size of $I_{F}$ is at most $cn$. We now apply the hypothetical subexponential-time algorithm to $I_{F}$. Since $|I|$ is linear in $n$, and since the reduction takes linear time in $n$, we conclude that \msc{3-CNF-Sat} is solvable in time $2^{o(n)} n^{O(1)} = 2^{o(n)}$. The proof follows.
\end{proof}

\begin{lemma}\label{lem:tupleseasy}
\msc{CSP} restricted to instances with $o(n)$ tuples is solvable in subexponential-time.
\end{lemma}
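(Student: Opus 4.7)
The plan is to give a direct branching algorithm whose search tree has only $2^{o(n)}$ leaves, in the same spirit as the size-based algorithm sketched in the commented-out portion of the preceding theorem, but with the branching factor controlled by the total number of tuples rather than by the size.

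First, let $t(n) = o(n)$ bound the total number of tuples of the input instance $I=(V,D,\CCC)$, and set $n=|V|$. I would dispose of trivial cases (a relation of an input constraint is empty, or $\DOM\le 1$) up front. For the main step, I would branch over the constraints: for each $(S,R)\in\CCC$, open $|R|$ child branches, one per tuple $\tau\in R$, each committing to the guess that $\tau$ is the tuple of $R$ matched by the sought satisfying assignment. A leaf of the resulting search tree is then a choice of one tuple per constraint, which (since by assumption every variable appears in at least one constraint scope) induces an assignment on all of~$V$. The algorithm accepts iff some leaf yields a consistent assignment, i.e., one in which any two chosen tuples agree on every shared variable; a simple linear-time sweep per leaf suffices for this check.

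For the running-time analysis, the number of leaves equals $\prod_{(S,R)\in\CCC}|R|$. Using the elementary inequality $x\le 2^x$ for $x\ge 1$, I would bound this by
\[
\prod_{(S,R)\in\CCC}|R| \;\le\; \prod_{(S,R)\in\CCC} 2^{|R|} \;=\; 2^{\sum_{(S,R)\in\CCC}|R|} \;=\; 2^{t(n)} \;=\; 2^{o(n)}.
\]
Combined with the per-leaf polynomial work, this gives total running time $2^{o(n)}|I|^{O(1)}$. Since $\DOM\ge 2$ in any nontrivial instance, we have $2^{o(n)}\le \DOM^{o(n)}$, so the bound fits the form $\DOM^{o(n)}|I|^{O(1)}$ required for uniform subexponential time for \msc{CSP}.

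There is no real obstacle in this argument. The only observation of substance is that the elementary product bound $\prod |R|\le 2^{\sum|R|}$ converts the sublinear bound on the total number of tuples directly into a sublinear exponent on the number of leaves; everything else is straightforward bookkeeping. Together with Lemma~\ref{lem:hardnesssize} and the remark that the hardness also holds for linearly many tuples, this pins down the $o(n)$ threshold for the number of tuples as tight.
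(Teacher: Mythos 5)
Your proposal is correct and is essentially the paper's own argument: both proofs guess, for each constraint, which of its tuples is matched by the sought assignment, and both bound the resulting search tree by $2^{\sum_{(S,R)\in\CCC}|R|}=2^{o(n)}$ (the paper organizes this as a binary branch per tuple with inconsistent selections pruned, while you branch $|R|$-ways per constraint, but the enumeration and the counting are the same). The trivial-case handling and the consistency check at the leaves also match the paper's reasoning.
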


\begin{proof}
Let $s(n) = o(n)$ be a complexity function, and consider the restriction of \msc{CSP} to instances with at most $s(n)$ tuples. We will show that this problem is solvable in time $\DOM^{s(n)}|I|^{O(1)}$.
Consider the algorithm $A$ that, for each tuple in a constraint, branches on whether or not the tuple is satisfied by the satisfying assignment sought. A branch in which more than one tuple in any constraint is selected as satisfied is rejected, and likewise for a branch in which no tuple in a constraint is selected. For each remaining branch, the algorithm checks if the assignment to the variables stipulated by the branch is consistent. If it is, the algorithm accepts; the algorithm rejects if no branch corresponds to a consistent assignment. Clearly, the algorithm $A$ is correct, and runs in time $2^{s(n)}|I|^{O(1)}=\DOM^{s(n)}|I|^{O(1)}$.
\end{proof}

Noting that the number of tuples is a lower bound for the instance size, the following theorem follow from Lemma~\ref{lem:hardnesssize} and Lemma~\ref{lem:tupleseasy}:

\begin{theorem}\label{thm:tuples}
\msc{CSP} is solvable in subexponential-time for instances in which the
number of tuples is $o(n)$, and unless the ETH fails, is not solvable in
subexponential-time if the number of tuples in the instances is
$\Omega(n)$.
\end{theorem}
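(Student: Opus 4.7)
The plan is to derive the theorem by directly combining the two lemmas that immediately precede it; no new reduction is needed.

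For the upper bound (the $o(n)$-tuples direction) I would simply invoke Lemma~\ref{lem:tupleseasy}: its branching algorithm, which for each tuple of each constraint guesses whether that tuple is the one realised by the sought assignment and then checks consistency of the resulting partial assignment, runs in time $\DOM^{s(n)}|I|^{O(1)}$, and this bound is subexponential whenever $s(n)=o(n)$.

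For the lower bound I would exploit the elementary inequality
\[
\SIZE(I) \;=\; \sum_{(S,R)\in\CCC} |S|\cdot|R| \;\geq\; \sum_{(S,R)\in\CCC} |R|,
\]
which holds because every constraint scope $S$ is non-empty. The right-hand side is exactly the total number of tuples of $I$, so the tuple count is always a lower bound on the instance size. Consequently, for any function $f(n)$, the class of instances of size at most $f(n)$ is \emph{contained} in the class of instances with at most $f(n)$ tuples. A hypothetical $\DOM^{o(n)}|I|^{O(1)}$-time algorithm that decides \msc{CSP} on every instance whose tuple count is $\Omega(n)$ would therefore also decide, in subexponential time, every instance of $\Omega(n)$ size — in particular the linear-size \msc{Boolean CSP} instances produced in Lemma~\ref{lem:hardnesssize}. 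Chaining through that lemma's reduction from linear-clause \msc{3-CNF-Sat}, this would yield a $2^{o(n)}$ algorithm for \msc{3-CNF-Sat}, contradicting the ETH via Lemma~\ref{lem:subexpnm}.

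The only subtle point — and it is more of a conceptual trap than a genuine obstacle — is getting the direction of the above containment right. It is tempting to attempt a separate reduction targetted specifically at tuple-bounded instances, but since the tuple-restricted class is the \emph{larger} of the two, Lemma~\ref{lem:hardnesssize} already supplies the needed hard instances and no further construction is required. Once this observation is in place the proof is a single line.
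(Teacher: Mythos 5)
Your proposal is correct and matches the paper's own argument exactly: the paper derives Theorem~\ref{thm:tuples} by citing Lemma~\ref{lem:tupleseasy} for the upper bound and Lemma~\ref{lem:hardnesssize} for the lower bound, noting precisely your observation that the number of tuples is bounded above by the instance size, so the linear-size hard instances already lie in the tuple-restricted class. You have the containment in the right direction, and nothing further is needed.
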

Next, we show that the subexponential-time solvability of \msc{Boolean CSP} with linear size, or with linear number of tuples, is equivalent to the statement that the ETH fails. We first need the following lemma.

\begin{lemma}
\label{lem:serflineartobddarity}
If the ETH fails then  \msc{Boolean CSP} with linear number of tuples is solvable in subexponential time.
\end{lemma}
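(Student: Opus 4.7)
The plan is to assume ETH fails and reduce \msc{Boolean CSP} with a linear number of tuples to \msc{3-CNF-Sat} on $O(n)$ variables, then invoke the assumed $2^{o(n')}|F|^{O(1)}$-time algorithm for \msc{3-CNF-Sat} (which is available because ETH fails). Since the reduction I have in mind preserves the number of variables up to a constant factor and runs in polynomial time, this immediately yields a $2^{o(n)}$ algorithm for \msc{Boolean CSP} with $T=O(n)$ tuples.

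The reduction is the standard ``tuple-selection'' encoding. Given an instance $I=(V,D,\CCC)$ of \msc{Boolean CSP} with $|V|=n$ and total number of tuples $T=\sum_{(S,R)\in\CCC}|R|=O(n)$, I keep the original Boolean variables and introduce one fresh variable $y_t$ per tuple $t$, adding $T$ extra variables in total. For each constraint $C=(S,R)$ I add the wide clause $\bigvee_{t\in R}y_t$ (``at least one tuple is chosen''). For each tuple $t=(v_1,\ldots,v_a)$ with scope $(x_{i_1},\ldots,x_{i_a})$ I add the $a$ binary implications $\neg y_t\vee\ell_j$, where $\ell_j=x_{i_j}$ if $v_j=1$ and $\ell_j=\neg x_{i_j}$ otherwise (``selecting $t$ forces the original variables to match $t$''). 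Finally, to meet the width-$3$ requirement, I replace every wide clause $y_{t_1}\vee\cdots\vee y_{t_w}$ by the standard chain $(y_{t_1}\vee y_{t_2}\vee z_1)\wedge(\neg z_1\vee y_{t_3}\vee z_2)\wedge\cdots\wedge(\neg z_{w-3}\vee y_{t_{w-1}}\vee y_{t_w})$, at the cost of $O(|R_C|)$ fresh auxiliary variables per constraint.

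Correctness is the routine check: a CSP-satisfying assignment $\sigma$ extends to a satisfying SAT assignment by picking, for each constraint, one tuple agreeing with $\sigma$, setting the corresponding $y_t$ to $1$, and setting the chain auxiliaries accordingly; conversely, any satisfying SAT assignment selects at least one $y_t=1$ per constraint, and the binary implications then force the original variables to match that tuple, so every constraint of $I$ is satisfied. The total number of SAT variables is $n$ (original) $+\,T$ (tuple-selection) $+\,O(T)$ (chain auxiliaries) $=O(n)$, and the formula $F$ has size polynomial in $|I|$. Running the hypothetical $2^{o(n')}|F|^{O(1)}$ algorithm for \msc{3-CNF-Sat} on $F$ therefore solves $I$ in $2^{o(n)}$ time.

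The main item to keep track of is the $O(n)$ bound on the SAT-variable count, which depends crucially on the linear-tuples hypothesis: all three contributions (original variables, tuple-selection variables, chain auxiliaries) scale with $n+T$. Without the linearity assumption the tuple-selection and chain-auxiliary blocks could blow up, so it is precisely this hypothesis that makes the reduction subexponential-preserving; everything else in the argument is purely syntactic bookkeeping.
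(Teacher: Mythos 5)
Your proof is correct, but it takes a genuinely different route from the paper's. The paper proceeds by a serf-reduction to \msc{Boolean $r$-CSP}: after a preprocessing step that merges any two variables agreeing on every tuple of a common constraint, it branches on whether a chosen tuple of a constraint with more than $d$ tuples is the satisfied one (branching vector $(1,d)$, giving $O(2^{\epsilon n})$ leaves once $d$ is chosen as a function of $\epsilon$ and the linearity constant), and then argues by pigeonhole that a constraint with at most $d$ tuples and pairwise distinct variable-columns has arity at most $2^d$, so each leaf instance falls under Proposition~\ref{thm:boundedarity}. You instead give a single polynomial-time many-one reduction directly to \msc{3-CNF-Sat} via the standard tuple-selection encoding, invoking the linear-tuples hypothesis only to bound the selector and chain variables by $O(n)$. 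Your argument is more elementary --- no branching, no $\epsilon$-dependent parameter $d$, no duplicate-column preprocessing --- and it delivers a formally stronger artifact: a uniform, variable-linear, polynomial-time reduction rather than a Turing serf-reduction with exponentially many oracle calls. What the paper's route buys in exchange is that it stays entirely inside the CSP world and makes the bounded-arity structure of the residual instances explicit. The only points you should make explicit are the degenerate case of an empty constraint relation (your wide clause becomes the empty clause, correctly forcing rejection) and the observation that $2^{o(n')}$ with $n'=O(n)$ is still $2^{o(n)}$ under the paper's effective $o(\cdot)$ convention; both are immediate.
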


\begin{proof}
We give a serf-reduction from \msc{Boolean CSP} with linear number of tuples to \msc{Boolean $r$-CSP} for some constant $r \geq 3$ to be specified below. The statement will then follow from Proposition~\ref{thm:boundedarity}.

Let $s(n) \leq cn$ be a complexity function, where $c > 0$ is a
constant. Consider the restriction of \msc{Boolean CSP} to instances in
which the number of tuples is at most $cn$; we will refer to this
problem as \msc{Boolean Linear Tuple CSP}. Let $0 < \epsilon < 1$ be
given. Choose a positive integer-constant $d$ large enough so that the
unique root of the polynomial $x^d - x^{d-1} -1$ in the interval $(1,
\infty)$ is at most $2^{\epsilon/c}$. (The uniqueness of the root was
shown \cite[Lemma 4.1]{ChenKanjJia01}, and the fact that the root
converges to 1 as $d \longrightarrow \infty$ can be easily verified.)
Let $I$ be an instance of \msc{Boolean Linear Tuple CSP}. We will assume
that, for any constraint $C$ in $I$, and any two variables $x, y$ in
$C$, there must be at least one tuple in $C$ in which the values of $x$
and $y$ differ. If not, then the values of $x$ and $y$ in any assignment
that makes $I$ consistent have to be the same; in this case we remove all
tuples from $I$ in which the values of $x$ and $y$ differ, replace $y$ with $x$ in every constraint in $I$, and simplify $I$ accordingly (if a
constraint becomes empty during the above process then we reject $I$).

We now apply the following branching procedure to $I$. For each constraint $C$ in $I$ with more than $d$ tuples, pick a tuple $t$ in $C$ and branch on whether or not $t$ is satisfied in an assignment that makes $I$ consistent (if such an assignment exists). In the branch where $t$ is satisfied, remove $C$ from $I$, remove every tuple in $I$ in which the value of a variable that appears in $C$ does not conform to the value of the variable in $t$, and finally remove all variables in $C$ from $I$ and its tuples (if a constraint becomes empty reject $I$). In the branch where $t$ is not satisfied, remove $t$ from $C$. Note that each branch either removes a tuple or removes at least $d$ tuples. We repeat the above branching until each constraint in the resulting instance contains at most $d$ tuples. The above branching can be depicted by a binary search tree whose leaves correspond to all the possible outcomes from the above branching. The number of the leaves in the search tree is $O(x_{0}^{cn})$, where $x_0$ is the root of the polynomial $x^d - x^{d-1} -1$ in the interval $(1, \infty)$. (The branching vector is not worse than $(1, d)$.) By the choice of $d$, the number of leaves in the search tree is $O(2^{\epsilon n})$. Let $I'$ be the resulting instance at a leaf of the search tree. We claim that the $\ARITY$ of $I'$ is at most $2^d$. Suppose not, and let $C$ be a constraint in $I'$ whose $\ARITY$ is more than~$2^{d}$. Pick an arbitrary ordering of the tuples in~$C$, and list them as $t_1, \ldots, t_s$, where $s \leq d$. For each variable in $C$, we associate a binary sequence of length $s$ whose $i$th bit is the value of the variable in $t_i$. Since the $\ARITY$ is more than $2^d$, the number of binary sequences is more than $2^d$. Since the length of each sequence is $s \leq d$, by the pigeon-hole principal, there exist two binary sequences that are identical. This contradicts our assumption that no constraint has two variables whose values are identical in all the tuples of the constraint. It follows that the instance $I'$ is an instance of \msc{Boolean $2^d$-CSP}. Since the number of variables in $I'$ is at most that of $I$, and the number of leaves in the search tree is $O(2^{\epsilon n})$, we have a serf-reduction from \msc{Boolean Linear Tuple CSP} to \msc{Boolean $r$-CSP} for some constant~$r$.
\end{proof}

Lemma~\ref{lem:hardnesssize}, combined with
Lemma~\ref{lem:serflineartobddarity} after noting that the size is an
upper bound on the number of tuples, give the following result.

\begin{theorem}
\label{thm:tupleeth}
\msc{Boolean CSP} with linear number of tuples is solvable in subexponential time if and only if the ETH fails.
\end{theorem}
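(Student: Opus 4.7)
The statement is an ``if and only if'', and both directions can be obtained by chaining together results already established in the excerpt; no new machinery is needed.

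For the ``if'' direction (ETH fails implies subexponential-time solvability), I would just invoke Lemma~\ref{lem:serflineartobddarity}, which exactly states this implication. That lemma gives a serf-reduction from \msc{Boolean CSP} with a linear number of tuples to \msc{Boolean $r$-CSP} for a constant $r = 2^d$ chosen from $\epsilon$, and Proposition~\ref{thm:boundedarity} then furnishes the subexponential algorithm under $\neg$ETH. So this half is a one-line citation.

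For the ``only if'' direction (subexponential-time solvability implies ETH fails), the key observation is the inequality between the two parameters. By definition $\SIZE(I) = \sum_{(S,R) \in \mathcal{C}} |S|\cdot|R|$ and the number of tuples is $\sum_{(S,R)\in\mathcal{C}} |R|$; since every scope is non-empty, $|S|\ge 1$, and hence
\[
\text{number of tuples}(I) \;\le\; \SIZE(I).
\]
Therefore every instance of linear size automatically has at most a linear number of tuples, i.e.\ the class of linear-size instances is contained in the class of linear-tuple instances. I would then argue: if there were a subexponential-time algorithm for \msc{Boolean CSP} on linear-tuple instances, the same algorithm would in particular solve \msc{Boolean CSP} on all linear-size instances in subexponential time, contradicting Lemma~\ref{lem:hardnesssize} unless ETH fails.

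Combining the two paragraphs yields the equivalence. There is no real obstacle: the only thing to state clearly is the direction of the inequality between \SIZE\ and the number of tuples, so that the reader sees that ``linear size'' is a restriction \emph{of} the ``linear tuples'' problem and hence hardness of the former transfers to the latter. Everything else is a direct application of Lemma~\ref{lem:hardnesssize} and Lemma~\ref{lem:serflineartobddarity}.
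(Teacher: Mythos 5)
Your proposal is correct and follows exactly the paper's route: the authors also obtain the theorem by combining Lemma~\ref{lem:serflineartobddarity} (for the ``if'' direction) with Lemma~\ref{lem:hardnesssize} ``after noting that the size is an upper bound on the number of tuples'' (for the ``only if'' direction). Your explicit statement of the inequality $\sum_{(S,R)}|R| \le \sum_{(S,R)}|S|\cdot|R|$ and the resulting containment of instance classes is precisely the observation the paper leaves implicit.
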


\begin{theorem}
\label{thm:linearcsp}
The \msc{Boolean CSP} with linear size is solvable in subexponential time if and only if the ETH fails.
\end{theorem}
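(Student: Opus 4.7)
The plan is to derive the theorem as a direct corollary of Lemma~\ref{lem:hardnesssize} and Theorem~\ref{thm:tupleeth}, combined with the elementary observation that the number of tuples of a CSP instance never exceeds its size.

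For the ``only if'' direction, I would appeal to Lemma~\ref{lem:hardnesssize}. Its proof already constructs, from a \msc{3-CNF-Sat} formula with a linear number of clauses, a \msc{Boolean CSP} instance of size at most $cn$. That construction therefore lives inside the class of linear-size \msc{Boolean CSP} instances, so any subexponential-time algorithm for \msc{Boolean CSP} restricted to linear-size instances would yield a subexponential-time algorithm for \msc{3-CNF-Sat} with linearly many clauses, which by Lemma~\ref{lem:subexpnm} refutes the ETH.

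For the ``if'' direction, I would invoke the backward direction of Theorem~\ref{thm:tupleeth}: if the ETH fails, then \msc{Boolean CSP} restricted to instances with a linear number of tuples is solvable in subexponential time. Since every constraint scope is non-empty, $\sum_{(S,R)\in\CCC}|R| \leq \sum_{(S,R)\in\CCC}|S|\cdot|R|$, so any linear-size instance automatically has a linear number of tuples, i.e., the class of linear-size instances sits inside the class of linear-tuple instances. Consequently the algorithm supplied by Theorem~\ref{thm:tupleeth} also handles the linear-size case.

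I expect no substantial obstacle: the heavy lifting has already been absorbed into Lemma~\ref{lem:hardnesssize} (for the hardness side) and Theorem~\ref{thm:tupleeth} (whose backward direction itself rests on Lemma~\ref{lem:serflineartobddarity} and Proposition~\ref{thm:boundedarity}). The present theorem is essentially a bookkeeping corollary expressing that linear size is sandwiched between the two already-characterised regimes.
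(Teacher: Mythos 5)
Your proposal is correct and matches the paper's own derivation: the paper likewise obtains this theorem by combining Lemma~\ref{lem:hardnesssize} (whose reduction already produces linear-size instances) with Lemma~\ref{lem:serflineartobddarity}, using exactly your observation that the number of tuples is bounded by the size, so linear-size instances form a subclass of linear-tuple instances.
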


\section{Treewidth and number of constraints}
\label{sec:constraints}

In this section we characterize the subexponential-time complexity of
\msc{CSP} with respect to the \emph{treewidth} of
certain graphs that model the interaction of variables and constraints.
Many NP-hard problems on graphs
become polynomial-time solvable for graphs whose treewidth is bounded
by a constant. For a definition of treewidth we refer to other sources
\cite{Bodlaender98}.
\citex{Freuder90} showed that \msc{CSP} is polynomial-time
solvable if a certain graph associated with the instance, the
\emph{primal graph}, is of bounded treewidth. The primal graph
associated with a \msc{CSP} instance $I$ has the variables in $I$ as its
vertices; two variables are joined by an edge if and only if they occur
together in the scope of a constraint.  Freuder's result was
generalized in various ways, and other restrictions on the graph
structure of CSP instances have been considered
\cite{GottlobLeoneScarcello00,Marx10a}. If the treewidth of the primal
graph is bounded, then so is the arity of the constraints.
The \emph{incidence graph} provides a more general graph model, as
it includes instances of unbounded arity even if the treewidth is
bounded.  The incidence graph associated with $I$ is a bipartite graph
with one partition being the set of variables in~$I$ and the other
partition being the set of constraints in~$I$; a variable and a
constraint are joined by an edge if and only if the variable occurs in
the scope of the constraint. For a \msc{CSP} instance, we denote by
$\TW$ the treewidth of its primal graph and by $\TW^*$ the treewidth of
its incidence graph.

As shown by \citex{Bodlaender96}, there exists for every fixed~$k$
a linear time algorithm that checks if a graph has
treewidth at most~$k$ and, if so, outputs a tree decomposition of
minimum width. It follows that we can check whether the treewidth of a graph is $O(1)$ in polynomial time.

\begin{lemma}\label{lem:tw*}
\msc{CSP} is solvable in polynomial time for instances whose incidence treewidth $\TW^*$ is $O(1)$.
\end{lemma}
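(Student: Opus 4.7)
The plan is to run a bottom-up dynamic programming algorithm on a tree decomposition of the incidence graph. By Bodlaender's algorithm, a tree decomposition of width at most $k = O(1)$ can be computed in polynomial (indeed linear) time, and it can be turned into a nice tree decomposition of size $O(|I|)$ by standard manipulations. Each bag $B$ of the resulting decomposition contains at most $k+1$ vertices of the incidence graph, which I would split into a set $V_B$ of variables and a set $C_B$ of constraints.

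For each bag $B$, I would maintain a table $T_B$ whose entries are indexed by pairs $(\tau, \sigma)$, where $\tau : V_B \to D$ is an assignment to the variables in $V_B$ and $\sigma$ selects, for each constraint $C \in C_B$, a candidate tuple $\sigma(C) \in R_C$ that agrees with $\tau$ on the overlap $V_B \cap \var(C)$. The entry $T_B(\tau,\sigma)$ records whether $(\tau,\sigma)$ extends to a consistent partial solution in the subtree rooted at $B$: i.e., whether one can assign values to all variables already forgotten below $B$ and tuples to all constraints already forgotten below $B$, so that each forgotten constraint is satisfied by its chosen tuple, each still-active constraint $C \in C_B$ remains consistent with $\sigma(C)$ on every variable of $\var(C)$ that has appeared in a bag below jointly with $C$, and all variable assignments agree. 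The transitions at leaf, introduce-variable, introduce-constraint, forget-variable, forget-constraint, and join nodes are routine. Since $|V_B| \le k+1$ and $|C_B| \le k+1$, the table has size at most $|D|^{k+1} \cdot |I|^{k+1} \le |I|^{O(k)}$ (using the assumption that every domain element occurs in some relation, so $|D| \le |I|$), and each transition compares at most two such tables, giving overall running time $|I|^{O(k)}$, which is polynomial for $k = O(1)$.

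The step I expect to be the main obstacle is the correctness of the forget-constraint transition. When a constraint $C$ is forgotten, I must be sure that its selected tuple $\sigma(C)$ is fully pinned down by the variable values already committed; otherwise the DP could accept assignments that violate $C$. This is guaranteed by the connectedness property of tree decompositions combined with the edge-covering property of the incidence graph: for every $x \in \var(C)$ the edge $\{C,x\}$ is covered by some bag containing both, and the bags containing $C$ form a subtree, so by the time $C$ is about to be forgotten, every variable of $\var(C)$ has already appeared in some earlier bag together with $C$, and the agreement condition $\sigma(C)|_{V_B \cap \var(C)} = \tau|_{V_B \cap \var(C)}$ enforced throughout the DP certifies that $\sigma(C)$ is a tuple of $R_C$ consistent with the final assignment. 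The join-node transition relies on the same subtree property, which prevents a constraint forgotten in one child's subtree from reappearing in the other child's subtree, so the two children's feasible sets combine cleanly conditional on agreement of $(\tau,\sigma)$ on the shared bag.
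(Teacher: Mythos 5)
Your proof is correct, but it takes a genuinely different route from the paper. The paper's proof is a two-line citation chain: bounded incidence treewidth implies bounded hypertree-width \cite{GottlobLeoneScarcello00}, and \msc{CSP} with bounded hypertree-width is solvable in polynomial time \cite{GottlobLeoneScarcello02b}. You instead give a direct, self-contained dynamic program on a nice tree decomposition of the incidence graph, with states $(\tau,\sigma)$ pairing a partial assignment on the bag's variables with a tuple selection for the bag's constraints. Your treatment of the one genuinely delicate point is right: since every edge $\{C,x\}$ of the incidence graph is covered by some bag and the bags containing $C$ form a subtree lying entirely below the node where $C$ is forgotten, every variable of $\var(C)$ co-occurs with $C$ in some bag of that subtree (this also covers variables of $\var(C)$ forgotten before $C$ is even relevant at the current bag, since the two connected subtrees must intersect below), so the agreement condition on $\sigma(C)$ is enforced for all of $\var(C)$ by the time $C$ is discarded. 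What each approach buys: the paper's version is shorter and delegates to established structural results; yours is elementary, makes explicit that the polynomial bound $|I|^{O(k)}$ hinges on the relations being listed in extension (so that $\sigma$ ranges over at most $|I|^{k+1}$ choices), and yields a concrete algorithm rather than an existence claim. Both give an XP-type bound $|I|^{O(k)}$, which suffices since the lemma only asserts polynomial time for $\TW^*=O(1)$.
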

\begin{proof}
If the $\TW^*$ is $O(1)$ then the {\em hypertree-width} is also $O(1)$
\cite{GottlobLeoneScarcello00}, and \msc{CSP} is solvable in
polynomial-time if the hypertree-width is
$O(1)$~\cite{GottlobLeoneScarcello02b}. Combining the preceding
statements gives the lemma.
\end{proof}

\aiversion{\enlargethispage*{5mm}}

\begin{lemma}
\label{lem:hardnesstw*}
Unless the ETH fails,  \msc{CSP} is not solvable in subexponential-time if the number of constraints is $\omega(1)$.
\end{lemma}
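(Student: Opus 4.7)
The plan is to reduce \msc{3-CNF-SAT} with a linear number of clauses (as guaranteed by Lemma~\ref{lem:subexpnm}) to \msc{CSP} with few constraints by packing many clauses into each constraint. Fix any computable, nondecreasing, unbounded function $g$, and suppose for contradiction that \msc{CSP} restricted to instances having at most $g(n)$ constraints is solvable in time $\DOM^{o(n)} |I|^{O(1)}$.

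Starting from an instance $F$ of \msc{3-CNF-SAT} with $n$ variables and $m \leq cn$ clauses, I would partition the clauses arbitrarily into $k := g(n)$ groups $G_1, \ldots, G_k$ of size at most $\lceil m/k \rceil$ each. For each group $G_i$ I would construct a Boolean constraint $C_i$ whose scope is the set of variables (at most $3m/k$ of them) occurring in the clauses of $G_i$, and whose relation consists of all truth assignments to those variables that simultaneously satisfy every clause in $G_i$ (at most $2^{3m/k}$ tuples). The resulting Boolean \msc{CSP} instance $I_F$ has $n$ variables, exactly $g(n)$ constraints, and is satisfiable iff $F$ is; it can be produced in time polynomial in $|I_F|$.

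Since $g(n) = \omega(1)$ implies $n/g(n) = o(n)$, the total size is $|I_F| \leq g(n) \cdot 2^{O(n/g(n))} = 2^{o(n)}$. Feeding $I_F$ to the hypothetical subexponential algorithm yields total running time $2^{o(n)} \cdot |I_F|^{O(1)} = 2^{o(n)}$, which decides $F$ in subexponential time and thereby refutes the ETH.

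The only delicate step is the bookkeeping with the effective $o(\cdot)$ notation: one must verify that the exponential blow-up $2^{3m/k}$ in the size of each constraint's relation is absorbed into $2^{o(n)}$ (which holds precisely because $k = g(n) = \omega(1)$ forces $m/k = o(n)$), and that composing the $o(n)$ exponent from the oracle's running time with the $o(n)$ exponent arising from $|I_F|^{O(1)}$ still produces an $o(n)$ exponent in the standard effective sense. Beyond this routine verification I do not foresee any deeper obstacle.
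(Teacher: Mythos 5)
Your proposal is correct and takes essentially the same approach as the paper: both partition the linearly many clauses/constraints of a sparse instance (obtained via Lemma~\ref{lem:subexpnm}) into $\omega(1)$ groups of size $o(n)$ and merge each group into a single constraint whose relation, of size $2^{o(n)}$, lists all locally consistent satisfying assignments, so the resulting instance has subexponentially bounded size and the hypothesized algorithm refutes the ETH. The only cosmetic difference is that the paper routes the reduction through \msc{Boolean $3$-CSP} with linearly many constraints (via Proposition~\ref{thm:boundedarity}) rather than directly from sparse \msc{3-CNF-Sat}.
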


\begin{proof}
Let $\lambda(n) = \omega(1)$ be a complexity function. We show that, unless the ETH fails, the restriction of \msc{CSP} to instances in which $\CONS \leq \lambda(n)$, denoted \msc{CSP$_{\lambda}$} is not solvable in $\DOM^{o(n)}$ time. By Proposition~\ref{thm:boundedarity}, it suffices to provide a serf-reduction from \msc{Boolean 3-CSP} with a linear number of constraints to \msc{Boolean CSP$_{\lambda}$}.

Let $I$ be an instance of \msc{Boolean CSP} in which $\CONS =n' \leq cn$, where $c > 0$ is a constant. Let $C_1, \ldots, C_{n'}$ be the constraints in $I$; we partition these constraints arbitrarily into $\lfloor \lambda(n) \rfloor$ many groups ${\cal C}_1, \ldots, {\cal C}_r$, where $r \leq \lfloor \lambda(n) \rfloor$, each containing at most $\lceil n'/\lambda(n) \rceil$ constraints. The serf-reduction $A$ works as follows.
$A$ ``merges'' all the constraints in each group ${\cal C}_i$, $i=1, \ldots, r$, into one constraint $C'_i$ as follows. The variable-set of $C'_i$ consists of the union of the variable-sets of the constraints in ${\cal C}_i$. For each constraint $C$ in ${\cal C}_i$, iterate over all tuples in $C$. After selecting a tuple from each constraint in ${\cal C}_i$, check if all the selected tuples are consistent, and if so merge all these tuples into a single tuple and add it to $C'_i$. By merging the tuples we mean form a single tuple over the variables in these tuples, and in which the value of each variable is its value in the selected tuples (note that the values are consistent).
Since each constraint in $I$ has arity at most 3, and hence contains at most 8 tuples, and since each group contains at most $\lceil n'/\lambda(n) \rceil$ constraints, $C'_i$ can be constructed in time
$8^{\lceil n'/\lambda(n) \rceil} n'^{O(1)}=2^{o(n)}$, and hence, all the constraints $C'_1, \ldots, C'_r$ can be constructed in time $2^{o(n)} n^{O(1)}=2^{o(n)}$. We now form the instance $I'$ whose variable-set is that of $I$, and whose constraints are $C'_1, \ldots, C'_r$. Since  $r \leq \lfloor \lambda(n) \rfloor$, $I'$ is an instance of \msc{CSP$_{\lambda}$}. Moreover, it is easy to see that $I$ is consistent if and only if $I'$ is. Since $I'$ can be constructed from $I$ in subexponential time and the number of variables in $I'$ is at most that of $I$, it follows that $A$ is a serf-reduction from \msc{Boolean 3-CSP} with a linear number of constraints to \msc{CSP$_{\lambda}$}.
\end{proof}

Since $\TW^* = O(\CONS)$, (removing the vertices corresponding to the constrains from the incidence graph results in an independent set)
Lemma~\ref{lem:tw*} and Lemma~\ref{lem:hardnesstw*} give the following result.

\begin{theorem}
\label{thm:cons}
\msc{CSP} is solvable in polynomial time for instances with $O(1)$ constraints, and unless the ETH fails, is not solvable in subexponential-time if the number of constraints is $\omega(1)$.
\end{theorem}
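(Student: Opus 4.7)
The plan is to derive the theorem by combining the two preceding lemmas, with one simple graph-theoretic observation bridging them. The upper bound is phrased in terms of the number of constraints, whereas Lemma~\ref{lem:tw*} is stated in terms of incidence treewidth $\TW^*$, so the first step will be to justify the inequality $\TW^* \leq \CONS$ used parenthetically just before the theorem. This holds because the incidence graph is bipartite with parts $V$ and $\CCC$, so the variable-vertices induce an independent set, which has treewidth $0$; adding all $\CONS$ constraint-vertices to every bag of the trivial tree decomposition of that independent set covers every edge of the incidence graph and produces a valid decomposition of width at most $\CONS$.

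With this bridge in place, the polynomial-time direction is immediate: if $\CONS = O(1)$ then $\TW^* = O(1)$, and Lemma~\ref{lem:tw*} yields a polynomial-time algorithm. For the lower-bound direction there is nothing more to do, since Lemma~\ref{lem:hardnesstw*} already asserts precisely that \msc{CSP} is not solvable in subexponential time when $\CONS = \omega(1)$, unless the ETH fails. Combining the two halves gives the stated dichotomy between $\CONS = O(1)$ and $\CONS = \omega(1)$.

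The main obstacle does not lie in the present argument, which is essentially an assembly step, but in Lemma~\ref{lem:hardnesstw*}, which already carries the serf-reduction that merges the given constraints into $\lfloor \lambda(n) \rfloor$ groups and precomputes their consistent joint tuples in time $2^{o(n)}$ per group; and in Lemma~\ref{lem:tw*}, which leverages the hypertree-width machinery of \citex{GottlobLeoneScarcello02b}. Given those, the present theorem follows by one inequality plus two citations.
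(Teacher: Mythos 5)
Your proposal is correct and matches the paper's own derivation: the paper likewise obtains Theorem~\ref{thm:cons} by combining Lemma~\ref{lem:tw*} and Lemma~\ref{lem:hardnesstw*} via the observation that $\TW^* = O(\CONS)$ (the variable side of the incidence graph is an independent set). Your explicit justification of that inequality by adding all constraint vertices to every bag is a fine way to make the paper's parenthetical remark precise.
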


\begin{theorem}
\label{thm:incidencetw}
\msc{CSP} is solvable in polynomial time for instances whose incidence treewidth $\TW^*$ is $O(1)$, and unless the ETH fails, is not solvable in subexponential-time for instances whose $\TW^*$ is $\omega(1)$.
\end{theorem}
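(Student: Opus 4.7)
The plan is to derive this theorem essentially as a corollary of Theorem~\ref{thm:cons} combined with the elementary bound $\TW^* \leq \CONS$ that is already noted in the paper just before Theorem~\ref{thm:cons}. The polynomial-time direction follows at once from Lemma~\ref{lem:tw*}: if $\TW^*$ is $O(1)$ then \msc{CSP} is solvable in polynomial time, so no further work is needed there.

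For the hardness direction, I would fix an arbitrary function $\lambda(n) = \omega(1)$ and aim to show that the restriction of \msc{CSP} to instances with $\TW^*(I) \leq \lambda(n)$ is not solvable in subexponential time, unless the ETH fails. Suppose for contradiction that such an algorithm $A$ exists. The key observation to spell out carefully is that, for every instance $I$, one has $\TW^*(I) \leq \CONS(I)$. To see this, note that the incidence graph is bipartite with the variable vertices forming an independent side; one can build a valid tree decomposition by taking one bag per variable, each containing that variable together with the full set of constraint vertices, and arranging these bags along a path. Every vertex appears in some bag, every (variable, constraint) edge is covered by the bag of the variable, and the connectivity condition is trivial (each variable appears in a single bag, each constraint in all of them). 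The resulting width is $\CONS(I)$.

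Consequently, every instance $I$ with $\CONS(I) \leq \lambda(n)$ also satisfies $\TW^*(I) \leq \lambda(n)$, so $A$ would in particular solve the $\CONS$-bounded subclass in subexponential time---contradicting Theorem~\ref{thm:cons}, and hence the ETH. Since Theorem~\ref{thm:cons} already does all the heavy lifting, I do not anticipate any substantive technical obstacle in this proof; the only point to be careful about is threading the $\omega(1)$ quantifier through consistently, using the same $\lambda$ on both sides of the containment $\{\,I : \CONS(I) \leq \lambda(n)\,\} \subseteq \{\,I : \TW^*(I) \leq \lambda(n)\,\}$, so that a hypothetical subexponential-time algorithm for the outer class genuinely transfers to the inner class without loss.
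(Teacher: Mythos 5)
Your proposal is correct and matches the paper's own derivation: the paper likewise obtains the polynomial-time direction from Lemma~\ref{lem:tw*} and the hardness direction by combining Lemma~\ref{lem:hardnesstw*} (which is exactly the hardness half of Theorem~\ref{thm:cons}) with the bound $\TW^* = O(\CONS)$, justified there by noting that deleting the constraint vertices from the incidence graph leaves an independent set. Your explicit path decomposition with one bag per variable is simply a spelled-out witness of that same bound, so no further comparison is needed.
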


\begin{theorem}
\label{thm:primaltw}
\msc{CSP} is solvable in subexponential-time for instances whose primal
treewidth $\TW$ is $o(n)$, and is not solvable in subexponential-time
for instances whose $\TW$ is $\Omega(n)$ unless (the general) \msc{CSP} is
solvable in subexponential time.
\end{theorem}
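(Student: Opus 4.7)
The plan is to handle the two directions separately. For the upper bound, given an instance $I$ with $n$ variables and $\TW(I) = o(n)$, I would first compute a tree decomposition of the primal graph of width $O(\TW(I))$ using a constant-factor approximation algorithm that runs in time $2^{O(\TW)} \cdot |I|^{O(1)}$ (for example a Bodlaender-style or Reed-style approximation, or equivalently the exact $2^{O(\TW)} \cdot |I|^{O(1)}$ algorithm tried for $w = 1, 2, \ldots$). Since $\TW = o(n)$, this step runs in $2^{o(n)} \cdot |I|^{O(1)}$ time and yields a decomposition of width $w = o(n)$. I would then apply the standard dynamic programming on tree decompositions (the natural extension of Freuder's algorithm), which solves \msc{CSP} in time $\DOM^{w+1} \cdot |I|^{O(1)}$. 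Altogether the running time is $\DOM^{o(n)} \cdot |I|^{O(1)}$, which is subexponential.

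For the hardness direction, I would use a straightforward padding argument. Given an arbitrary \msc{CSP} instance $I = (V, D, \CCC)$ with $|V| = n$, construct a new instance $I'$ by adding $n$ fresh variables $y_1, \ldots, y_n$ together with a single constraint $C^*$ whose scope is $(y_1, \ldots, y_n)$ and whose relation contains exactly one tuple (say the constant tuple $(d, \ldots, d)$ for some fixed $d \in D$). Then $I'$ has $n' = 2n$ variables; it is consistent iff $I$ is (the new constraint is trivially satisfiable and shares no variables with $I$); and its primal graph contains a clique on $\{y_1, \ldots, y_n\}$, so $\TW(I') \geq n-1 \geq n'/2 - 1 = \Omega(n')$. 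A hypothetical $\DOM^{o(n')} \cdot |I'|^{O(1)}$ algorithm for the class of instances with $\TW = \Omega(n')$ applied to $I'$ would then decide $I$ in time $\DOM^{o(n)} \cdot |I|^{O(1)}$, making general \msc{CSP} solvable in subexponential time.

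The only (mild) obstacle is the first step of the upper bound: a polynomial-time $O(\log \tw)$-approximation for treewidth would not suffice, because the extra $\log \tw$ factor would leave $\DOM^{O(\tw \log \tw)}$ in the exponent and could exceed $\DOM^{o(n)}$ when $\tw$ grows close to linearly. Hence one must use a constant-factor treewidth approximation whose running time is of the form $2^{O(\tw)} \cdot |I|^{O(1)}$, which stays within the $2^{o(n)}$ budget under the assumption $\TW = o(n)$.
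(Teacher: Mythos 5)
Your upper-bound argument is essentially the paper's: it too first computes an approximate tree decomposition via a constant-factor algorithm running in time $2^{O(\TW)}|I|^{O(1)}$ (the paper uses Amir's $4$-approximation, which runs in $2^{4.38\,\TW}|I|^{O(1)}$) and then applies Freuder-style dynamic programming in time $\DOM^{O(\TW)}|I|^{O(1)}$; your remark that a polynomial-time $O(\log\TW)$-approximation would not suffice is precisely why the constant-factor routine is needed. The hardness direction is where you genuinely diverge. The paper fixes $s(n)=cn$, defines \msc{Linear-$\TW$-CSP} as the restriction to instances with $\TW$ \emph{at most} $s(n)$, and serf-reduces general \msc{CSP} to it by taking $\lceil 1/c\rceil$ disjoint copies of $I$: this inflates the variable count to $N'=\lceil 1/c\rceil n$ while leaving the treewidth unchanged, so the trivial bound $\TW\le n$ becomes $\TW\le cN'$. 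You instead add one $n$-ary single-tuple constraint on fresh variables to force the treewidth \emph{up} to at least $n'/2-1$. The two constructions therefore prove slightly different formalizations of ``$\TW$ is $\Omega(n)$'': the paper (consistently with Lemma~\ref{lem:hardnesssize} and Theorem~\ref{thm:tuples}) reads it as ``the upper bound $s(n)$ imposed on $\TW$ is $\Omega(n)$'', i.e., hardness of the class $\{\TW\le cn\}$ for every fixed $c>0$, whereas you read it literally as hardness of the class $\{\TW\ge cn\}$. Your reduction is correct for your reading---the clique on $y_1,\dots,y_n$ forces $\TW(I')\ge n-1$, the added constraint has size $n\cdot 1$, and satisfiability is preserved---and padding with $Kn$ clique variables instead of $n$ would push the constant as close to $1$ as desired. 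Neither reduction subsumes the other: yours produces only genuinely high-treewidth instances, while the paper's produces instances whose treewidth may be small in absolute terms but is certified to be at most a $c$-fraction of the inflated variable count. Both are legitimate tightness companions to the $o(n)$ upper bound.
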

\begin{proof}
The fact that \msc{CSP} is solvable in subexponential time if $\TW = o(n)$ follows from the facts that: (1) we can compute a tree decomposition of width at most $4 \cdot \TW$ in time $2^{4.38 \TW}|I|^{O(1)}$~\cite{Amir10}, and (2) \msc{CSP} is solvable in time $O(\DOM^{\TW}) |I|^{O(n)}$~\cite{Freuder90}.

Let $s(n) = cn$, where $c > 0$ is a constant, and consider the restriction of \msc{CSP} to instances whose $\TW$ is at most $s(n)$, denoted \msc{Linear-$\TW$-CSP}. Note that the number of vertices in the primal graph is $n$, and hence $\TW \leq n$. Therefore, if $c \geq 1$, then the statement trivially follows. Suppose now that $c < 1$, and let $I$ be an instance of \msc{CSP} with $n$ variables. By ``padding'' $\lceil 1/c \rceil$ disjoint copies of $I$ we obtain an instance $I'$ that is equivalent to $I$, whose number of variables is $N'=\lceil 1/c \rceil n$, and whose $\TW$ is the same as that of $I$. Since the $\TW$ of $I$ is at most $n$, it follows that the $\TW$ of $I'$ is at most $cN'$, and hence $I'$ is an instance of \msc{Linear-$\TW$-CSP}. This gives a serf-reduction from \msc{CSP} to \msc{Linear-$\TW$-CSP}.
\end{proof}

We note that the hypothesis ``\msc{CSP} is solvable in
subexponential time'' in the above theorem implies that
``ETH fails'' by
Proposition~\ref{thm:boundedarity}, and implies that
\msc{CNF-Sat} has a nonuniform subexponential-time algorithm by
Theorem~\ref{thm:sattocsp}. We also note that the difference between the subexponential-time
complexity of \msc{CSP} with respect to the two structural parameters
$\TW$ and $\TW^*$: Whereas the threshold function for the
subexponential-time solvability of \msc{CSP} with respect to $\TW$ is
$o(n)$, the threshold function with respect to $\TW^*$ is $O(1)$.

\aiversion{\enlargethispage*{5mm}}
\section{Degree and arity}
\label{sec:degarity}
In this section we give characterizations of the subexponential-time complexity of \msc{CSP} with respect to the degree and the arity. The proofs are omitted.

\begin{theorem}
\label{thm:degarity}
Unless ETH fails,  \msc{CSP}  is not solvable in subexponential-time if $\DEG \geq 2$.
\end{theorem}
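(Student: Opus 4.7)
The plan is to serf-reduce \msc{3-CNF-Sat} to \msc{Boolean CSP} restricted to instances with $\DEG\le 2$; together with the ETH this yields the claim, since for a Boolean instance $\DOM^{o(\VARS)}=2^{o(\VARS)}$. By Lemma~\ref{lem:subexpnm} I may start from a 3-CNF formula $F$ on $n$ variables with $O(n)$ clauses in which every variable has at most a constant number $d$ of occurrences.

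Given such an $F$ with variables $x_1,\dots,x_n$ and clauses $C_1,\dots,C_m$, I would construct a CSP instance $I$ as follows. For every occurrence of a variable $x_i$ in a clause $C_j$, introduce a fresh Boolean variable $x_i^{(j)}$; the total number of variables of $I$ is then at most $dn=O(n)$. For each clause $C_j$ add a \emph{clause constraint} $K_j$ whose scope consists of the (at most three) copies corresponding to the literals of $C_j$ and whose relation lists the at most seven Boolean tuples that satisfy $C_j$. For each original variable $x_i$ add an \emph{equality constraint} $E_i$ whose scope lists every copy $x_i^{(j)}$ and whose relation contains just the two tuples $(0,\dots,0)$ and $(1,\dots,1)$. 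Each copy $x_i^{(j)}$ then appears in exactly two constraints of $I$, namely $K_j$ and $E_i$, so $\DEG(I)\le 2$; and a straightforward check shows that $I$ is satisfiable iff $F$ is, since the $E_i$ force every group of copies to take a common value which yields a truth assignment to $x_i$.

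The reduction runs in polynomial time and preserves the number of variables up to a constant factor, hence is a serf-reduction. A hypothetical subexponential-time algorithm for \msc{CSP} restricted to $\DEG\le 2$ would thus decide $F$ in time $2^{o(\VARS(I))}|I|^{O(1)}=2^{o(n)}$, contradicting the ETH. I do not foresee a substantive obstacle: the construction is routine. The one point worth watching is that replacing the single $d$-ary equality $E_i$ by a chain of binary equalities would push intermediate copies to degree three, which is why a single equality constraint per original variable is used, so that every variable of $I$ lies in exactly two constraints.
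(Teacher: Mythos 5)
Your proof is correct and follows essentially the same route as the paper's: both start from a sparsified, bounded-occurrence CNF instance via Lemma~\ref{lem:subexpnm} and use an all-equal constraint with relation $\{(0,\dots,0),(1,\dots,1)\}$ to tie together copies of each variable so that every copy lies in exactly one clause constraint and one equality constraint, giving degree $2$ with only a linear increase in the number of variables. The only cosmetic difference is that the paper first produces a degree-$3$ \msc{Boolean $r$-CSP} instance and then splits each degree-$3$ variable into three copies joined by a ternary equality gadget, whereas you introduce one copy per occurrence and a single constant-arity equality constraint per original variable in one step.
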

\longversion{
\begin{proof}
The statement follows from the proof of Theorem~\ref{thm:boundedarity} after noting that, by Lemma~\ref{lem:subexpnm}, one can use \msc{$r$-CNF-Sat} with degree at most $3$ (after introducing a linear number of new variables) in the reduction. This will result in instances of \msc{Boolean $r$-CSP} with degree at most $3$ as well. Now for each variable $x$ of degree $3$ in an instance of \msc{Boolean $r$-CSP}, we introduce two new variables $x', x''$, and add a constraint whose variables are $\{x, x', x''\}$, and containing the two tuples $(0, 0, 0)$ and $(1, 1, 1)$; this constraint stipulates that the values of $x, x', x''$ be the same. We then substitute the variable $x$ in one of the constraints it appears in with $x'$, and in another constraint that it appears in with $x''$. Therefore, in the new instance, the degree of each of $x, x', x''$ becomes 2. After repeating this step to every variable of degree 3, we obtain an instance of \msc{Boolean $r$-CSP} in which the degree of each variable is at most 2. Since the increase in the number of variables is linear, a subexponential-time algorithm for \msc{Boolean $r$-CSP} with degree at most 2 implies a subexponential-time algorithm for \msc{$r$-CNF-Sat}.
\end{proof}
}
\longversion{As mentioned in Section~\ref{sec:intro},}
There is a folklore reduction
from an instance of \msc{3-Colorability} with $n$ vertices that results
in an instance of \msc{CSP} with $n$ variables, $\ARITY =2$, and $\DOM
=3$. Since the \msc{3-Colorability} problem is SNP-complete under
serf-reductions~\cite{ImpagliazzoPaturiZane01}, we get:

\begin{theorem}
\label{thm:degarity}
Unless ETH fails,  \msc{CSP}  is not solvable in subexponential-time if $\ARITY \geq 2$ (and $\DOM \geq 3$).
\end{theorem}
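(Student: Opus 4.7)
The plan is to formalise the folklore reduction sketched just before the theorem, and then chain it with the SNP\hy completeness of \msc{3-Colorability} under serf-reductions.

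First, given a graph $G=(V,E)$ with $|V|=n$, I would construct a CSP instance $I_G=(V,D,\CCC)$ with domain $D=\{1,2,3\}$ (the three ``colours'') and, for every edge $uv\in E$, a constraint with scope $(u,v)$ and relation $R_{\neq}=\SB (i,j)\in D\times D \SM i\neq j\SE$. Then $I_G$ has exactly $n$ variables, $\ARITY=2$, $\DOM=3$, and an assignment $\tau\colon V\to D$ satisfies $I_G$ if and only if it is a proper 3-colouring of $G$. The construction runs in polynomial time and crucially maps an $n$-vertex instance of \msc{3-Colorability} to an $n$-variable instance of \msc{CSP}.

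Second, suppose for contradiction that \msc{CSP} restricted to instances with $\ARITY\geq 2$ and $\DOM\geq 3$ were solvable in subexponential time, i.e.\ in time $\DOM^{o(n)}|I|^{O(1)}$. Composing the algorithm with the above reduction would solve \msc{3-Colorability} in time $3^{o(n)}n^{O(1)}=2^{o(n)}$, since $\DOM=3$ is a constant and $|I_G|=n^{O(1)}$. This is plainly a serf-reduction (the parameter, i.e.\ the number of variables, is preserved exactly, and the reduction is polynomial).

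Third, by \cite{ImpagliazzoPaturiZane01} the \msc{3-Colorability} problem is complete for SNP under serf-reductions. Therefore a $2^{o(n)}$-time algorithm for it would propagate through serf-reductions to all SNP problems, in particular to \msc{3-CNF-Sat}, contradicting the ETH. This yields the claimed conditional lower bound.

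There is no real obstacle here; the only point requiring care is to check that the reduction qualifies as a serf-reduction in the sense defined earlier in the paper, which it does because the variable set of $I_G$ coincides with the vertex set of $G$ and the construction is polynomial-time. Note also that restricting the hypothesis to $\ARITY\geq 2$ and $\DOM\geq 3$ is consistent with our construction, since the produced instances lie in this class, and larger arity or domain size can only make the problem harder.
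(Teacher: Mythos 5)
Your proof is correct and follows essentially the same route as the paper's own argument: the folklore reduction from \msc{3-Colorability} to \msc{CSP} with $\VARS=n$, $\ARITY=2$, $\DOM=3$ via disequality constraints on edges, combined with the SNP-completeness of \msc{3-Colorability} under serf-reductions from \cite{ImpagliazzoPaturiZane01}. Nothing further is needed.
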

\longversion{
\begin{proof}
We will show that a subexponential-time algorithm for \msc{CSP} with $\ARITY =2$ and $\DOM=3$ implies that the \msc{3-Colorability} problem is solvable in subexponential time. Since the \msc{3-Colorability} problem is SNP-complete under serf-reductions~\cite{ImpagliazzoPaturiZane01}, the statement of the theorem will follow. Recall that the \msc{3-Colorability} problem asks if the vertices of a given graph can be colored with at most 3 colors so that no two adjacent vertices are assigned the same color.

The reduction is folklore. Given an instance of $G=(V, E)$ of
\msc{3-Colorability}, we construct an instance~$I$ of \msc{CSP} as
follows. The variables of $I$ correspond to the vertices of $G$, and the
domain of $I$ corresponds to the color-set $\{1, 2, 3\}$. For every edge
of the graph we construct a constraint of $\ARITY =2$ over the two
variables corresponding to the endpoint of the edge. The constraint
contains all tuples corresponding to valid colorings of the endpoints of
the edge. It is easy to see that $G$ has a 3-coloring if and only if $I$
is consistent. Since for the instance $I$ we have $\VARS= n$, $\ARITY
=2$, and $\DOM =3$, an algorithm running in time $\DOM^{o(n)}$ for
\msc{CSP} with $\ARITY =2$ and $\DOM =3$ would imply a
subexponential-time algorithm for \msc{3-Colorability}.
\end{proof}
We note that \msc{CSP} with $\DOM=2$ and $\ARITY = 2$ is solvable in polynomial time via a simple reduction to \msc{2-CNF-Sat}.
}

\section{Conclusion}

We have provided a first analysis of the subexponential-time complexity
of CSP under various restrictions.  We have obtained several tight
thresholds that dictate the subexponential-time complexity of CSP. These
tight results are summarized in the following table.
\aiversion{\vspace*{-2mm}}
\begin{center}\small
\begin{tabular}{@{}l@{~~~~~~~~~~~}ll@{}}
  \toprule
  CSP $\in$ SUBEXP & CSP $\notin$ SUBEXP &
  Result\\
                   &  (assuming the ETH)   & \\
\midrule
$\TUPLES\in o(n)$ &  $\TUPLES\in \Omega(n)$ & Theorem~\ref{thm:tuples}\\
$\CONS\in O(1)$ (even in P) &  $\CONS\in \omega(1)$ & Theorem~\ref{thm:cons}\\
$\TW^*\in O(1)$ (even in P) &  $\TW^*\in \omega(1)$ & Theorem~\ref{thm:incidencetw}\\
$\TW\in o(n)$ &  $\TW\in \Omega(n)$ & Theorem~\ref{thm:primaltw}\\
\bottomrule
\end{tabular}
\end{center}
\medskip
Furthermore, we have linked the subexponential-time complexity of
\msc{CSP} with bounded arity to \msc{Clique}, and \msc{CSP} with
bounded domain size to \msc{CNF-Sat}. These results suggest that these
restrictions of \msc{CSP} may be ``harder than'' \msc{$k$-CNF-Sat}---for
which a subexponential-time algorithm would lead to the failure of
the ETH---with respect to subexponential-time complexity. It would be
interesting to provide stronger theoretical evidence for
this separation.

\longversion{
%

\begin{thebibliography}{}

\bibitem[\protect\citeauthoryear{Alber \bgroup \em et al.\egroup
  }{2004}]{AlberFernauNiedermeier04}
Jochen Alber, Henning Fernau, and Rolf Niedermeier.
\newblock Parameterized complexity: exponential speed-up for planar graph
  problems.
\newblock {\em J. Algorithms}, 52(1):26--56, 2004.

\bibitem[\protect\citeauthoryear{Amir}{2010}]{Amir10}
Eyal Amir.
\newblock Approximation algorithms for treewidth.
\newblock {\em Algorithmica}, 56(4):448--479, 2010.

\bibitem[\protect\citeauthoryear{Beigel and Eppstein}{2005}]{BeigelEppstein05}
Richard Beigel and David Eppstein.
\newblock 3-coloring in time {$\mathcal{O}(1.3289^n)$}.
\newblock {\em J. Algorithms}, 54(2):168--204, 2005.

\bibitem[\protect\citeauthoryear{Benhamou \bgroup \em et al.\egroup
  }{2012}]{BenhamouParisSigel12}
Belaid Benhamou, Lionel Paris, and Pierre Siegel.
\newblock Dealing with satisfiability and n-ary csps in a logical framework.
\newblock {\em Journal of Automated Reasoning}, 48(3):391--417, 2012.

\bibitem[\protect\citeauthoryear{Bennaceur}{2004}]{Bennaceur04}
Hachemi Bennaceur.
\newblock A comparison between {SAT} and {CSP} techniques.
\newblock {\em Constraints}, 9(2):123--138, 2004.

\bibitem[\protect\citeauthoryear{Bodlaender}{1996}]{Bodlaender96}
Hans~L. Bodlaender.
\newblock A linear-time algorithm for finding tree-decompositions of small
  treewidth.
\newblock {\em SIAM J. Comput.}, 25(6):1305--1317, 1996.

\bibitem[\protect\citeauthoryear{Bodlaender}{1998}]{Bodlaender98}
Hans~L. Bodlaender.
\newblock A partial {$k$}-arboretum of graphs with bounded treewidth.
\newblock {\em Theoretical Computer Science}, 209(1-2):1--45, 1998.

\bibitem[\protect\citeauthoryear{Calabro \bgroup \em et al.\egroup
  }{2006}]{CalabroImpagliazzoPaturi06}
Chris Calabro, Russell Impagliazzo, and Ramamohan Paturi.
\newblock A duality between clause width and clause density for {SAT}.
\newblock In {\em 21st Annual IEEE Conference on Computational Complexity (CCC
  2006), 16-20 July 2006, Prague, Czech Republic}, pages 252--260. IEEE
  Computer Society, 2006.

\bibitem[\protect\citeauthoryear{Chen \bgroup \em et al.\egroup
  }{2001}]{ChenKanjJia01}
Jianer Chen, Iyad~A. Kanj, and Weijia Jia.
\newblock Vertex cover: further observations and further improvements.
\newblock {\em J. Algorithms}, 41(2):280--301, 2001.

\bibitem[\protect\citeauthoryear{Chen \bgroup \em et al.\egroup
  }{2005}]{ChenEtal05}
Jianer Chen, Benny Chor, Mike Fellows, Xiuzhen Huang, David Juedes, Iyad~A.
  Kanj, and Ge~Xia.
\newblock Tight lower bounds for certain parameterized {NP}-hard problems.
\newblock {\em Information and Computation}, 201(2):216--231, 2005.

\bibitem[\protect\citeauthoryear{Chen \bgroup \em et al.\egroup
  }{2009}]{ChenKanjXia09}
Jianer Chen, Iyad~A. Kanj, and Ge~Xia.
\newblock On parameterized exponential time complexity.
\newblock {\em Theoretical Computer Science}, 410(27-29):2641--2648, 2009.

\bibitem[\protect\citeauthoryear{Dimopoulos and
  Stergiou}{2006}]{DimopoulosSterigou06}
Yannis Dimopoulos and Kostas Stergiou.
\newblock Propagation in {CSP} and {SAT}.
\newblock In Fr{\'e}d{\'e}ric Benhamou, editor, {\em Principles and Practice of
  Constraint Programming - CP 2006, 12th International Conference, CP 2006,
  Nantes, France, September 25-29, 2006, Proceedings}, volume 4204 of {\em
  Lecture Notes in Computer Science}, pages 137--151. Springer Verlag, 2006.

\bibitem[\protect\citeauthoryear{Feder and Motwani}{2002}]{FederMotwani02}
Tom{\'a}s Feder and Rajeev Motwani.
\newblock Worst-case time bounds for coloring and satisfiability problems.
\newblock {\em J. Algorithms}, 45(2):192--201, 2002.

\bibitem[\protect\citeauthoryear{Fellows \bgroup \em et al.\egroup
  }{2011}]{FellowsEtAl11}
Michael~R. Fellows, Fedor~V. Fomin, Daniel Lokshtanov, Frances Rosamond, Saket
  Saurabh, Stefan Szeider, and Carsten Thomassen.
\newblock On the complexity of some colorful problems parameterized by
  treewidth.
\newblock {\em Information and Computation}, 209(2):143--153, 2011.

\bibitem[\protect\citeauthoryear{Flum and Grohe}{2006}]{FlumGrohe06}
J\"{o}rg Flum and Martin Grohe.
\newblock {\em Parameterized Complexity Theory}, volume XIV of {\em Texts in
  Theoretical Computer Science. An EATCS Series}.
\newblock Springer Verlag, Berlin, 2006.

\bibitem[\protect\citeauthoryear{Freuder}{1990}]{Freuder90}
Eugene~C. Freuder.
\newblock Complexity of k-tree structured constraint satisfaction problems.
\newblock In Howard~E. Shrobe, Thomas~G. Dietterich, and William~R. Swartout,
  editors, {\em Proceedings of the 8th National Conference on Artificial
  Intelligence. Boston, Massachusetts, July 29 - August 3, 1990, 2 Volumes},
  pages 4--9. AAAI Press / The MIT Press, 1990.

\bibitem[\protect\citeauthoryear{Gottlob \bgroup \em et al.\egroup
  }{2000}]{GottlobLeoneScarcello00}
Georg Gottlob, Nicola Leone, and Francesco Scarcello.
\newblock A comparison of structural {CSP} decomposition methods.
\newblock {\em Artificial Intelligence}, 124(2):243--282, 2000.

\bibitem[\protect\citeauthoryear{Gottlob \bgroup \em et al.\egroup
  }{2002}]{GottlobLeoneScarcello02b}
G.~Gottlob, N.~Leone, and F.~Scarcello.
\newblock Hypertree decompositions and tractable queries.
\newblock {\em J. of Computer and System Sciences}, 64(3):579--627, 2002.

\bibitem[\protect\citeauthoryear{Grandoni and
  Italiano}{2006}]{GrandoniItaliano06}
Fabrizio Grandoni and Giuseppe~F. Italiano.
\newblock Algorithms and constraint programming.
\newblock In Fr{\'e}d{\'e}ric Benhamou, editor, {\em Principles and Practice of
  Constraint Programming - CP 2006, 12th International Conference, CP 2006,
  Nantes, France, September 25-29, 2006, Proceedings}, volume 4204 of {\em
  Lecture Notes in Computer Science}, pages 2--14. Springer Verlag, 2006.

\bibitem[\protect\citeauthoryear{Grohe}{2006}]{Grohe06}
Martin Grohe.
\newblock The structure of tractable constraint satisfaction problems.
\newblock In Rastislav Kralovic and Pawel Urzyczyn, editors, {\em Mathematical
  Foundations of Computer Science 2006, 31st International Symposium, MFCS
  2006, Star{\'a} Lesn{\'a}, Slovakia, August 28-September 1, 2006,
  Proceedings}, volume 4162 of {\em Lecture Notes in Computer Science}, pages
  58--72. Springer Verlag, 2006.

\bibitem[\protect\citeauthoryear{Impagliazzo and
  Paturi}{2001}]{ImpagliazzoPaturi01}
Russell Impagliazzo and Ramamohan Paturi.
\newblock On the complexity of {$k$}-{SAT}.
\newblock {\em J. of Computer and System Sciences}, 62(2):367--375, 2001.

\bibitem[\protect\citeauthoryear{Impagliazzo \bgroup \em et al.\egroup
  }{2001}]{ImpagliazzoPaturiZane01}
Russell Impagliazzo, Ramamohan Paturi, and Francis Zane.
\newblock Which problems have strongly exponential complexity?
\newblock {\em J. of Computer and System Sciences}, 63(4):512--530, 2001.

\bibitem[\protect\citeauthoryear{Jeavons and Petke}{2012}]{JeavonsPetke12}
Peter Jeavons and Justyna Petke.
\newblock Local consistency and sat-solvers.
\newblock {\em J. Artif. Intell. Res.}, 43:329--351, 2012.

\bibitem[\protect\citeauthoryear{Lokshtanov \bgroup \em et al.\egroup
  }{2011}]{LokshtanovMarxSuarabh11}
Daniel Lokshtanov, D{\'a}niel Marx, and Saket Saurabh.
\newblock Lower bounds based on the exponential time hypothesis.
\newblock {\em Bulletin of the European Association for Theoretical Computer
  Science}, 105:41--72, 2011.

\bibitem[\protect\citeauthoryear{Marx}{2010a}]{Marx10b}
D{\'a}niel Marx.
\newblock Can you beat treewidth?
\newblock {\em Theory of Computing}, 6:85--112, 2010.

\bibitem[\protect\citeauthoryear{Marx}{2010b}]{Marx10a}
D{\'a}niel Marx.
\newblock Tractable hypergraph properties for constraint satisfaction and
  conjunctive queries.
\newblock In Leonard~J. Schulman, editor, {\em Proceedings of the 42nd ACM
  Symposium on Theory of Computing, STOC 2010, Cambridge, Massachusetts, USA,
  5-8 June 2010}, pages 735--744. ACM, 2010.

\bibitem[\protect\citeauthoryear{Papadimitriou and
  Yannakakis}{1991}]{PapadimitriouYannakakis91}
Christos~H. Papadimitriou and Mihalis Yannakakis.
\newblock Optimization, approximation, and complexity classes.
\newblock {\em J. of Computer and System Sciences}, 43(3):425--440, 1991.

\bibitem[\protect\citeauthoryear{Papadimitriou and
  Yannakakis}{1999}]{PapadimitriouYannakakis99}
Christos~H. Papadimitriou and Mihalis Yannakakis.
\newblock On the complexity of database queries.
\newblock {\em J. of Computer and System Sciences}, 58(3):407--427, 1999.

\bibitem[\protect\citeauthoryear{Razgon}{2006}]{Razgon05}
Igor Razgon.
\newblock Complexity analysis of heuristic {CSP} search algorithms.
\newblock In Brahim Hnich, Mats Carlsson, Fran\c{c}ois Fages, and Francesca
  Rossi, editors, {\em Recent Advances in Constraints, Joint ERCIM/CoLogNET
  International Workshop on Constraint Solving and Constraint Logic
  Programming, CSCLP 2005, Uppsala, Sweden, June 20-22, 2005, Revised Selected
  and Invited Papers}, volume 3978 of {\em Lecture Notes in Computer Science},
  pages 88--99. Springer Verlag, 2006.

\bibitem[\protect\citeauthoryear{Rossi \bgroup \em et al.\egroup
  }{2006}]{RossiVanBeekWalsh06}
F.~Rossi, P.~van Beek, and T.~Walsh, editors.
\newblock {\em Handbook of Constraint Programming}.
\newblock Elsevier, 2006.

\bibitem[\protect\citeauthoryear{Schuler}{2005}]{Schuler05}
Rainer Schuler.
\newblock An algorithm for the satisfiability problem of formulas in
  conjunctive normal form.
\newblock {\em J. Algorithms}, 54(1):40--44, 2005.

\bibitem[\protect\citeauthoryear{Traxler}{2008}]{Traxler08}
Patrick Traxler.
\newblock The time complexity of constraint satisfaction.
\newblock In Martin Grohe and Rolf Niedermeier, editors, {\em Parameterized and
  Exact Computation, Third International Workshop, IWPEC 2008, Victoria,
  Canada, May 14-16, 2008. Proceedings}, volume 5018 of {\em Lecture Notes in
  Computer Science}, pages 190--201. Springer Verlag, 2008.

\end{thebibliography}

}
\aiversion{
\cleardoublepage
\pagebreak

}

\end{document}